\documentclass[11pt]{article}
\usepackage{booktabs}
\usepackage[margin=1in]{geometry}
\usepackage[english]{babel}
\usepackage[utf8]{inputenc}
\usepackage[T1]{fontenc}
\usepackage{amsmath,amssymb,amsthm}
\usepackage{tikz}
\usepackage{xcolor}
\usepackage{enumitem}
\usepackage[colorlinks=true,urlcolor=blue,citecolor=blue,linkcolor=blue]{hyperref}
\usepackage{graphicx}

\allowdisplaybreaks
\numberwithin{equation}{section}

\newcommand{\I}{\mathbf{1}}
\newcommand{\ii}{\mathrm{i}}
\newcommand{\e}{\mathrm{e}}
\newcommand{\cI}{\mathcal{I}}

\newcommand{\cS}{\mathcal{S}}
\newcommand{\ordo}{\mathcal{O}}
\newcommand{\cZ}{\mathcal{Z}}
\newcommand{\Spec}{\operatorname{Spec}}

\newcommand{\tA}{\widetilde A}
\newcommand{\tB}{\widetilde B}
\newcommand{\hB}{\widehat B}
\newcommand{\hA}{\widehat A}

\newcommand{\eps}{\varepsilon}
\newcommand{\tbeta}{\widetilde\beta}
\newcommand{\tlambda}{\widetilde\lambda}
\newcommand{\talpha}{\widetilde\alpha}

\newtheorem{theorem}{Theorem}[section]
\newtheorem{proposition}[theorem]{Proposition}
\newtheorem{lemma}[theorem]{Lemma}
\newtheorem{corollary}[theorem]{Corollary}
\newtheorem{remark}[theorem]{Remark}
\newtheorem{definition}[theorem]{Definition}
\usepackage{authblk}

\title{Free fermions in disguise without exponential degeneracies}
\author[1]{Bal\'azs Pozsgay}
\affil[1]{MTA-ELTE “Momentum” Integrable Quantum Dynamics Research Group,\protect\\
    ELTE E\"otv\"os Lor\'and University, Budapest, Hungary}
\date{}

\begin{document}
\maketitle

\begin{abstract}
  Recently, a number of spin chain models have been discovered that are solvable via hidden free-fermionic structures,
  going beyond the Jordan-Wigner paradigm. However, all examples in the literature displayed 
  degeneracies that grow exponentially with the volume and that are homogeneous in the spectrum (identical
  degeneracies for all energy levels).
  In this note we present a model that can be solved by ``free fermions in disguise'' (FFD),
  such that the spectrum is free from exponential degeneracies for generic coupling constants. The model can be seen
  as a particular 
  perturbation of two Ising chains. Alternatively, it can be realized as an interpolation between a standard
  Jordan-Wigner solvable chain and the original FFD model of Fendley. We used ChatGPT Pro 5.4 and 5.5 as a research
  assistant; 
  in the Supplemental Material we provide details about the collaboration between the AI and the human author.
\end{abstract}

\section{Introduction}
\label{sec:introduction}

Finding exactly solvable models is of continuing interest in quantum many-body physics, or theoretical physics more
generally. Arguably the simplest models are those that can be solved by free bosons or free fermions. In certain cases
a model Hamiltonian can appear to be interacting; nevertheless it can be brought into a quadratic form in free
variables/operators by using an appropriate transformation.

In quantum spin chains the most famous example for such an operation is the Jordan-Wigner transformation
\cite{jordan-wigner}, which introduces Dirac or Majorana fermions acting on the Hilbert space of the spin-1/2 chain. The
Jordan-Wigner transformation has a canonical form that is used routinely to treat, for example, the quantum Ising chain
or the XY models, but it also has various algebraic generalizations (see, for example,
\cite{japan-JW-gen-1,japan-JW-gen-2,japan-JW-gen-3}). The conditions for generalized Jordan-Wigner solvability were
determined recently in the independent works \cite{japan-free-fermion-JW,chapman-jw}. 

In a seminal work \cite{ffd} Fendley discovered a spin chain model that was shown to be free-fermionic, although its
solution does not fit into the Jordan-Wigner framework. The title of the paper was ``free fermions in disguise'' (FFD), and
this name was subsequently used in several follow-up papers to denote the model or families of such models.
The methods of \cite{ffd} were soon generalized to similar models in
\cite{fermions-behind-the-disguise}, where it was also shown that the FFD model cannot be solved by a generalized
Jordan-Wigner transformation.
In this work we do not attempt to review the literature on the FFD model and its various
generalizations; instead we refer the reader to a recent work \cite{sajat-solving-ffd-2} where an in-depth review is given.

We instead
focus on a particular problem that has so far plagued all FFD-type models: the exponential degeneracies in the spectrum.
It was already recognized in \cite{ffd} that in the FFD model the number of free-fermionic eigenmodes is far less than
the number of qubits on which the model is defined. As a result, every energy level is degenerate, so that the
degeneracy is homogeneous along the spectrum. The number of fermionic eigenmodes typically grows linearly with the
length of the 
spin chain, but with a coefficient smaller than 1. As a result, there will be an extensive number of 
zero modes that generate a very non-trivial symmetry algebra \cite{ffd,eric-lorenzo-ffd-1}. These zero modes are
responsible for the 
exponential degeneracies.

Based on the seminal works it appeared that such exponential degeneracies are inevitable. The technical reason
for the missing 
eigenmodes can be formulated using algebra and graph theory \cite{fermions-behind-the-disguise} (see also
\cite{sajat-solving-ffd-1} and Subsection \ref{sec:origin} below).
It is notable that exponential degeneracies were present even
in those models \cite{sajat-FP-model,sajat-claws} whose construction evaded some of the limitations presented in
\cite{fermions-behind-the-disguise}.
However, no compelling physical argument was known as to why it has to be this way.

In this work we present an FFD-type model that is free from exponential degeneracies (for generic coupling
constants). In other words, the number of fermionic eigenmodes in the model is either equal to the number of qubits in the
chain, or the difference is only $\ordo(1)$. We present the model in an abstract algebraic formulation, but we also
provide two representations. The first one is via two coupled quantum Ising chains. The second representation can be
seen as an interpolation between the XY chain (given by the Dzyaloshinskii–Moriya interaction term) and the original
FFD model of Fendley. 

This interpolation is one of the key results of our work. It shows that the FFD-type models and the Jordan-Wigner
solvable ones are not truly separate families. Instead, the Jordan-Wigner solvable models are simply very special
examples in the wide family of free-fermionic models. This point of view was already advocated in
\cite{unified-graph-th} and more recently in \cite{sajat-solving-ffd-1,sajat-solving-ffd-2}. However, to our knowledge
the present model 
is the first one that allows for a continuous transition between these apparently different families.

Our final Hamiltonians are linear combinations of two Hamiltonians, each of which is free-fermionic, and their
algebraic structures are compatible with each other. A single Hamiltonian only has half the number of eigenmodes needed,
but the linear combination of the two Hamiltonians finally leads to a spectrum that is generally non-degenerate or
has a homogeneous degeneracy of 2 or 4. The mechanism for the mutual commutativity of the algebraic structures is very similar to
the mechanism that underlies the free-fermionic solvability of the model of \cite{sajat-FP-model}. A crucial difference is
that in the model of \cite{sajat-FP-model} there was just one Hamiltonian (with half the number of eigenmodes needed) and
there was no obvious way to introduce a second, complementary Hamiltonian. 

In the next section we introduce the abstract algebra behind our model and we give two concrete
representations. Then, in Subsection \ref{subsec:main-results}, we collect the key results and give a map to this
paper. Our conclusions are presented 
in Section \ref{sec:concl} together with some open problems.

\bigskip
{\bf The use of AI in this work.}

We used ChatGPT 5.4 and 5.5 Pro as a research assistant. Most computations were
performed by the AI, and were checked later by the human author. Besides analytic checks, we also 
performed a numerical test: an exact
diagonalization routine was programmed and run by the human author, completely independently of the numerical work of the
AI. We found agreement with the numerical results of the AI
and we independently confirmed 
the free-fermionic spectrum for generic coupling constants.

The text was written in a collaboration between the human and the AI. The Abstract, the Introduction, and the Conclusions
were written exclusively by the human author.

It may be of interest to spell out what the actual substantial contributions of the AI were and which
points required the insight of the human author. We present a list of these points in a
Supplemental Material\footnote{It can be downloaded as a separate PDF file from the preprint server.}.

\section{The model and the main results}
\label{sec:ising-algebra}

\subsection{The Ising algebra and the Hamiltonians}

\label{subsec:dressed-ising-algebra}

We define our models via an abstract algebra. 
This algebra is a double tensor product of two identical Ising-type algebras. We now give the precise definitions.

Fix an integer $M\geq 2$. We have two families of generators $B_j$ and $\tilde B_j$ with $M$ elements each, satisfying
\begin{equation}
B_j=B_j^\dagger,
\qquad
\tB_j=\tB_j^\dagger,
\qquad
j=1,\dots,M.
\end{equation}
Their squares are proportional to the identity:
\begin{equation}
B_j^2=\beta_j^2\I,
\qquad
\tB_j^2=\tbeta_j^2\I.
\label{eq:dressed-squares}
\end{equation}
Here $\beta_j$ and $\tbeta_j$ are real non-zero parameters.

The two families of generators satisfy the same algebra:
\begin{align}
\{B_j,B_{j+1}\}&=0,
&[B_j,B_k]&=0\qquad (|j-k|>1),
\label{eq:B-path-relations}
\\
\{\tB_j,\tB_{j+1}\}&=0,
&[\tB_j,\tB_k]&=0\qquad (|j-k|>1).
\label{eq:tB-path-relations}
\end{align}
It is important that there is no periodicity assumed. Thus each family is an open Ising-type algebra. The two families
commute with 
one another:
\begin{equation}
  [B_j,\tB_k]=0
\qquad \text{for all }j,k.
\label{eq:cross-B-commuting}
\end{equation}
We also introduce special 
cubic
products, which couple the two algebras:
\begin{equation}
A_j:=\ii\lambda_j B_jB_{j+1}\tB_j,
\qquad
\tA_j:=-\ii\tlambda_j\tB_j\tB_{j+1}B_{j+1},
\qquad (j=1,\dots,M-1),
\label{Adef}
\end{equation}
where $\lambda_j,\tlambda_j\in\mathbb R$ and we assume again that all of them are non-zero.
The factors $\ii$ and $-\ii$ make
these operators Hermitian, because $B_jB_{j+1}$ and
$\tB_j\tB_{j+1}$ are anti-Hermitian.  The squares of $A_j$ and $\tA_j$ are also scalar.  We write
\begin{equation}
A_j^2=\alpha_j^2\I,
\qquad
\tA_j^2=\talpha_j^2\I,
\label{eq:A-squares-alpha}
\end{equation}
where we introduced
\begin{equation}
\alpha_j=\lambda_j\beta_j\beta_{j+1}\tbeta_j,
\qquad
\talpha_j=\tlambda_j\tbeta_j\tbeta_{j+1}\beta_{j+1}.
\label{eq:pauli-alpha-defs}
\end{equation}
We define two Hamiltonians
\begin{equation}
H_1=\sum_{j=1}^{M}B_j+\sum_{j=1}^{M-1}A_j,
\qquad
H_2=\sum_{j=1}^{M}\tB_j+\sum_{j=1}^{M-1}\tA_j.
\label{eq:H1H2-algebraic}
\end{equation}
As final Hamiltonians we will consider the combinations $H_1\pm H_2$. 

The most important case of the final Hamiltonian is obtained when all cubic coefficients
collapse to one global parameter,
\begin{equation}
\lambda_j=\tlambda_j=\lambda
\qquad (j=1,\dots,M-1).
\label{eq:global-lambda}
\end{equation}
Then
\begin{equation}
A_j=\ii\lambda B_jB_{j+1}\tB_j,
\qquad
\tA_j=-\ii\lambda\tB_j\tB_{j+1}B_{j+1}.
\label{eq:cubic-A-global}
\end{equation}
In this case the model has $2M+1$ free parameters, given by the Ising-type couplings $\beta_j$, $\tilde\beta_j$ and the
global parameter $\lambda$. We could remove an overall normalization factor (for example by setting $\lambda=1$).
However, it is convenient to keep these parameters at the moment.

\subsection{Frustration graphs and simplicial cliques}

The notion of a frustration graph will be used repeatedly below. Frustration graphs can be introduced for sets of Pauli
strings (products of Pauli matrices acting on the Hilbert space of a lattice spin model). 
A Pauli string always squares to $\pm 1$, and two Pauli strings either commute or anti-commute. The commutation rules
can be encoded conveniently by a graph: we draw a vertex for each Pauli string, and for a pair of Pauli strings we draw
an edge if and only if they anti-commute.

These algebraic properties can be generalized to a more abstract setting, opening the way to ``graph-Clifford algebras'' \cite{graph-clifford,sajat-solving-ffd-1}.
Let us assume that we are dealing with an abstract algebra, where every generator squares to the identity, and two
generators either commute or anti-commute. Then we encode these properties into the frustration graph in the same way as
for the concrete Pauli strings: We draw vertices for the generators, and we connect two vertices if and only if the two generators
anti-commute. As an illustration, a frustration graph corresponding to the algebra of one family of the $B_j$
operators is shown in Figure \ref{fig:fr0}.

Frustration graphs play a central role in establishing the free-fermionic properties (and more generally: integrability)
in the lattice models that we study.

In the original definition of graph-Clifford algebras the generators are algebraically independent of each
other. However, the frustration graph can be introduced even in those cases when there are relations between the
generators. The effects of the extra relations will be discussed below.
If we
treat the terms in $H_1$ and $H_2$ as independent, then the frustration graphs for both Hamiltonians are as shown in
Fig. \ref{fig:fr1}.

\begin{figure}[htbp]
  \centering
  \begin{tikzpicture}[x=1.25cm,y=1.25cm,font=\small,
ZY/.style={rectangle,draw,minimum width=10mm,minimum height=7mm,inner sep=1pt},
BB/.style={line width=0.9pt}
]

\foreach \i in {1,2,3,4,5,6} {
\node[ZY] (B\i) at ({\i+0.5},0) {$B_{\i}$};
}

\draw[BB] (B1) -- (B2);
\draw[BB] (B2) -- (B3);
\draw[BB] (B3) -- (B4);
\draw[BB] (B4) -- (B5);
\draw[BB] (B5) -- (B6);

\end{tikzpicture}

\caption{Frustration graph for one copy of the Ising-type algebra for $M=6$.}
  \label{fig:fr0}
\end{figure}

\begin{figure}[t]
  \centering
  \begin{tikzpicture}[x=1.25cm,y=1.25cm,font=\small,
XYY/.style={circle,draw,minimum size=8mm,inner sep=0pt},
ZY/.style={rectangle,draw,minimum width=10mm,minimum height=7mm,inner sep=1pt},
AA/.style={line width=0.9pt},
BB/.style={line width=0.9pt},
AB/.style={line width=0.4pt}
]

\foreach \i in {1,2,3,4,5} {
\node[XYY] (A\i) at ({\i+1},1.4) {$A_{\i}$};
}
\foreach \i in {1,2,3,4,5,6} {
\node[ZY] (B\i) at ({\i+0.5},0) {$B_{\i}$};
}

\draw[AA] (A1) -- (A2);
\draw[AA] (A2) -- (A3);
\draw[AA] (A3) -- (A4);
\draw[AA] (A4) -- (A5);
\draw[AA,bend left=40] (A1) to (A3);
\draw[AA,bend left=40] (A2) to (A4);
\draw[AA,bend left=40] (A3) to (A5);

\draw[BB] (B1) -- (B2);
\draw[BB] (B2) -- (B3);
\draw[BB] (B3) -- (B4);
\draw[BB] (B4) -- (B5);
\draw[BB] (B5) -- (B6);

\draw[AB] (A1) -- (B1);
\draw[AB] (A1) -- (B2);
\draw[AB] (A1) -- (B3);

\draw[AB] (A2) -- (B1);
\draw[AB] (A2) -- (B2);
\draw[AB] (A2) -- (B3);
\draw[AB] (A2) -- (B4);

\draw[AB] (A3) -- (B2);
\draw[AB] (A3) -- (B3);
\draw[AB] (A3) -- (B4);
\draw[AB] (A3) -- (B5);

\draw[AB] (A4) -- (B3);
\draw[AB] (A4) -- (B4);
\draw[AB] (A4) -- (B5);
\draw[AB] (A4) -- (B6);

\draw[AB] (A5) -- (B4);
\draw[AB] (A5) -- (B5);
\draw[AB] (A5) -- (B6);
\end{tikzpicture}

\caption{Frustration graph for the Hamiltonian $H_1$ for $M=6$. The terms $A_k$ are algebraically dependent on the
  generators $B_j$ and $\tilde B_j$; however, if we focus on $H_1$ alone, then we can regard them as independent
  operators, and thus the corresponding independent vertices are added to the graph.
  The Hamiltonian $H_2$ has an identical frustration graph. }
    \label{fig:fr1}
\end{figure}

\subsection{Representation by coupled Ising chains}
\label{subsec:ising-representation}

A faithful local representation
of the two commuting Ising algebras is obtained on two auxiliary Ising chains
with Pauli operators $\sigma_r^\alpha$ and $\tau_r^\alpha$:
\begin{equation}
B_{2r-1}\mapsto \beta_{2r-1}\sigma_r^z,
\qquad
B_{2r}\mapsto \beta_{2r}\sigma_r^x\sigma_{r+1}^x.
\label{eq:ising-B-map}
\end{equation}
\begin{equation}
\tB_{2r-1}\mapsto \tbeta_{2r-1}\tau_r^z,
\qquad
\tB_{2r}\mapsto \tbeta_{2r}\tau_r^x\tau_{r+1}^x.
\label{eq:ising-tB-map}
\end{equation}
The auxiliary length may be chosen as $\lceil(M+1)/2\rceil$.  Each Ising
algebra is represented by the usual alternating field/bond algebra of an
open Ising chain.

The cubic products then give the three-site generators
\begin{align}
A_{2r-1}&\mapsto -\alpha_{2r-1}\,
\sigma_r^y\sigma_{r+1}^x\tau_r^z,
&
  A_{2r}&\mapsto\alpha_{2r}
          \,
\sigma_r^x\sigma_{r+1}^y\tau_r^x\tau_{r+1}^x,
\label{eq:ising-A-map}
\\
  \tA_{2r-1}&\mapsto \tilde\alpha_{2r-1}
              \,
\sigma_r^x\sigma_{r+1}^x\tau_r^y\tau_{r+1}^x,
&
  \tA_{2r}&\mapsto  -\tilde\alpha_{2r}
            \,
\sigma_{r+1}^z\tau_r^x\tau_{r+1}^y.
\label{eq:ising-tA-map}
\end{align}
Thus the algebraic Hamiltonian is represented as a local two-leg Ising
ladder with open boundary conditions. The total number of qubits used in the construction is $2\lceil(M+1)/2\rceil$.

\subsection{Representation from the FFD perturbation of an XY chain}
\label{subsec:xy-representation}

We now use the alternative notation $X_j, Y_j, Z_j$ for the Pauli matrices acting on site $j$.
The algebra above has the following realization on a spin chain of
length $L=M+1$:
\begin{equation}
B_j=\beta_jY_jX_{j+1},
\qquad
\tB_j=\tbeta_jX_jY_{j+1},
\label{eq:pauli-B-realization}
\end{equation}
and
\begin{equation}
A_j=\alpha_j\,Z_jX_{j+1}X_{j+2},
\qquad
\tA_j=\tilde\alpha_j\,X_jX_{j+1}Z_{j+2}.
\label{eq:pauli-A-realization}
\end{equation}

It is worthwhile to consider the antisymmetric combination $H_1-H_2$. Focusing on the homogeneous version, where we set
$\beta_j=\tbeta_j=1$ and $\lambda_j=\tlambda_j=\lambda$, we get
\begin{equation}
H_1-H_2=
\sum_{j=1}^{L-1}(Y_jX_{j+1}-X_jY_{j+1})
+\lambda\sum_{j=1}^{L-2}
\left(Z_jX_{j+1}X_{j+2}-X_jX_{j+1}Z_{j+2}\right).
\label{eq:HQ-xy-perturbation}
\end{equation}
Here the nearest-neighbour part is a Jordan-Wigner solvable, $U(1)$-symmetric XY
chain. More precisely, it is given by the Dzyaloshinskii–Moriya interaction term, which can be transformed to the
standard form by a homogeneous twist transformation. The three-site interaction in the Hamiltonian above is the
antisymmetric combination of the terms 
 found in Fendley's free-fermion-in-disguise
construction \cite{ffd}. 

We note that the two representations have the same qubit count for odd $M$, whereas for even $M$ the Ising
representation has one more qubit than the XY representation.

\subsection{Main results}
\label{subsec:main-results}

This article proves the following statements for the Hamiltonians $H_{1,2}$ without using either concrete
representation in the proofs.

\begin{enumerate}[label=(\roman*),itemsep=0.4em]
\item The local cross-relations of the cubic algebra imply a complete
commutativity criterion.  Namely, we find the commutativity
$[H_1,H_2]=0$ if and only if
\begin{equation}
\lambda_j=\tlambda_j=\lambda_{j+1},
\end{equation}
for all indices for which the equation is defined.  Hence the generic
commuting family is the global-$\lambda$ family
\eqref{eq:cubic-A-global}. This is shown in Section \ref{sec:local-algebra}.

We stress that the terms in $H_1$ and $H_2$ do not all commute with one another, and the
special algebraic properties are needed for the cancellations. 

\item Each Hamiltonian is separately free-fermionic:
their frustration graphs are even-hole-free and
claw-free, so the free-fermion machinery of
\cite{fermions-behind-the-disguise,sajat-solving-ffd-1} applies.
More specifically, we can diagonalize the Hamiltonians as
\begin{equation}
  H_1=\sum_{k=1}^\cS \eps_{1,k} \mathcal{Z}_{1,k},\qquad H_2=\sum_{k=1}^\cS \eps_{2,k} \mathcal{Z}_{2,k},
\end{equation}
where $\cS=\lfloor (M+1)/2\rfloor$ is the number of fermionic eigenmodes, $\eps_{a,k}$ with $a=1,2$ and
$k=1,\dots,\cS$ are the 
single-particle energies, and 
$\mathcal{Z}_{a,k}$ are mode occupation operators with the properties
\begin{equation}
  \mathcal{Z}_{a,k}^2=1,\qquad [\mathcal{Z}_{a,k},\mathcal{Z}_{a,\ell}]=0.
\end{equation}
These operators are constructed from fermionic ladder operators $\Psi_{a,\pm k}$ as
\begin{equation}
  \mathcal{Z}_{a,k}=[\Psi_{a,k},\Psi_{a,-k}].
\end{equation}
The ladder operators satisfy the canonical anti-commutation algebra, which in our convention reads
\begin{equation}
  \{\Psi_{a,k},\Psi_{a,\ell}\}=\delta_{k+\ell}.
\end{equation}
We also construct transfer matrices $T_{1,M}(u)$ and $T_{2,M}(u)$ satisfying
\begin{equation}
  [T_{1,M}(u),T_{1,M}(v)]=0,\qquad [T_{2,M}(u),T_{2,M}(v)]=0.
\end{equation}
This is discussed in Section \ref{sec:independent-ff}.

\item 
The free-fermionic structures of the two Hamiltonians are also mutually commutative:
  \begin{equation}
    [\mathcal{Z}_{1,k},\mathcal{Z}_{2,\ell}]=0.
  \end{equation}
This is established using the commutativity of $H_1$ and $H_2$ and their free-fermionic decomposition. Furthermore, we
prove the cross-commutativity of the transfer matrices
\begin{equation}
[T_{1,M}(u),T_{2,M}(v)]=0.
\end{equation}
The proofs are presented in Subsection \ref{sec:cross1}.

\item We also treat the compatibility of the ladder operators.
After adjoining compatible edge operators, the two hidden Dirac
families can be chosen either mutually commuting or mutually anti-commuting.
This is presented in Subsection \ref{subsec:compatible-edges}.

\item In the spatially homogeneous case, the one-particle
  energies are encoded by an explicit scalar polynomial. We find the explicit recursion relation for this polynomial and
  study the large-volume limits. This leads to standing-wave equations for the single-particle modes. 
  In two special limits we obtain the one-particle energies of the Ising model and the FFD model,
  respectively. This is presented in Section \ref{sec:homogeneous-dispersion}.

\end{enumerate}

In Section \ref{sec:degs} we discuss the degeneracies of the spectra in these models. The actual degeneracies
depend on the 
representation chosen.
The final statement for generic couplings is as follows:
If $M$ is odd, then all
energy levels are non-degenerate in both representations. If $M$ is even, then there is a homogeneous degeneracy of 4 in
the Ising representation and 2 in the XY representation. Numerical data for the XY case is presented in Appendix
\ref{app:numerics}. 

In two further appendices we present two computations that are not crucial for our main line of argument, but further
support the claim that our model is indeed a natural extension of Fendley's FFD model:
\begin{enumerate}[label=(\roman*),itemsep=0.4em]
\item The highly non-local transfer matrices can be factorized into a product of local operators, 
  similar to the factorization derived earlier in \cite{ffd} (see also \cite{sajat-floquet}). This is presented
  in Appendix \ref{app:factor}.
\item 
  The transfer matrices and the fermionic operators can be expressed as Matrix Product Operators (MPOs) with
  fixed small bond dimension. We treat only the XY representation. 
In that representation the transfer matrix admits an MPO representation with bond dimension 3,
so that this MPO is a simple modification of the one found by Fendley \cite{ffd}. The fermionic
operators can be represented as an inhomogeneous MPO of bond dimension 4. This appears to be a new result even for the
original FFD model. These computations are presented in
Appendix \ref{app:MPO}.
\end{enumerate}

\section{Local algebra and mutual commutativity}
\label{sec:local-algebra}

Here we investigate the commutation relations between the terms in the Hamiltonians $H_1$ and $H_2$. We determine the
frustration graphs that we would obtain if the $A_k$ and $\tilde A_k$ operators were algebraically independent
generators. We also establish the sufficient and necessary condition for the global commutativity of $H_1$ and $H_2$.

\subsection{Same-family relations and frustration graphs}
\label{subsec:same-family-relations}

The definitions \eqref{Adef} determine all the commutation relations inside a
single Hamiltonian.

\begin{lemma}[Internal sign rules]
\label{lem:internal-sign-rules}
In the $H_1$ family, the only anti-commuting pairs are
\begin{align}
\{B_j,B_k\}=0 &\quad \Longleftrightarrow\quad |j-k|=1,
\label{eq:BB-internal}
\\
\{A_j,B_k\}=0 &\quad \Longleftrightarrow\quad k\in\{j-1,j,j+1,j+2\},
\label{eq:AB-internal}
\\
\{A_j,A_k\}=0 &\quad \Longleftrightarrow\quad 1\leq |j-k|\leq 2.
\label{eq:AA-internal}
\end{align}
All other same-family pairs commute.  The same statements hold for the
tilded family.
\end{lemma}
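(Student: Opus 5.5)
The plan is to reduce every relation to counting anti-commuting pairs of generators. For operators $P,Q$ that are products of generators, each pair of which either commutes or anti-commutes, we have $PQ=(-1)^{n}QP$. Here $n$ is the number of pairs (one generator from $P$, one from $Q$) that anti-commute. Scalar prefactors such as $\ii\lambda_j$ do not matter.

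Since $B$ and $\tB$ commute by \eqref{eq:cross-B-commuting}, only $B$–$B$ pairs and $\tB$–$\tB$ pairs can contribute. By \eqref{eq:B-path-relations}, $B_a$ and $B_b$ anti-commute iff $|a-b|=1$; the same holds for the tilded generators. Note also that $B_a$ commutes with itself, so repeated indices contribute nothing. Relation \eqref{eq:BB-internal} is then just the defining relation \eqref{eq:B-path-relations}.

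For \eqref{eq:AB-internal}, write $A_j\propto B_jB_{j+1}\tB_j$. The factor $\tB_j$ commutes with $B_k$, so $n$ equals the number of elements of $\{j,j+1\}$ adjacent to $k$. We then check the cases:
\begin{itemize}
\item $k=j-1$: only $j$ is adjacent, so $n=1$.
\item $k=j$: only $j+1$ is adjacent, so $n=1$.
\item $k=j+1$: only $j$ is adjacent, so $n=1$.
\item $k=j+2$: only $j+1$ is adjacent, so $n=1$.
\item Otherwise nothing is adjacent, so $n=0$.
\end{itemize}
Hence the anti-commuting set is exactly $\{j-1,\dots,j+2\}$.

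For \eqref{eq:AA-internal}, take $A_j$ and $A_k$ with $d=k-j\ge0$. We count separately:
\begin{itemize}
\item $n_B$, the number of pairs in $\{j,j+1\}\times\{k,k+1\}$ at distance exactly $1$;
\item $n_{\tB}$, which is $1$ iff $|j-k|=1$.
\end{itemize}
The cases are:
\begin{itemize}
\item $d=0$: the pairs $(j,j+1)$ and $(j+1,j)$ give $n_B=2$, so $n=2$ and the operators commute, as they must.
\item $d=1$: the pair $(j,j+2)$ is at distance $2$, $(j,j+1)$ and $(j+1,j+2)$ are at distance $1$, and $(j+1,j+1)$ is at distance $0$. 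So $n_B=2$, $n_{\tB}=1$, $n=3$: anti-commute.
\item $d=2$: the only pair at distance $1$ is $(j+1,j+2)$, so $n_B=1$, $n_{\tB}=0$: anti-commute.
\item $d\ge3$: all distances are at least $2$, so $n=0$: commute.
\end{itemize}

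For the tilded family, $\tA_j\propto\tB_j\tB_{j+1}B_{j+1}$ is obtained from $A_j$ by exchanging $B\leftrightarrow\tB$ and shifting the single factor from index $j$ to index $j+1$. This shift does not change the counts. In the $\tA_j$–$\tB_k$ case, the lone $B_{j+1}$ commutes with $\tB_k$. In the $\tA$–$\tA$ case, $B_{j+1}$ versus $B_{k+1}$ again anti-commute iff $|j-k|=1$. So the identical computation applies.

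The only delicate point is the $d=1$ case of the $A$–$A$ relation, where the $B$-pairs and the $\tB$-pair must be tallied together to get the odd total $n=3$. All remaining cases are a finite index check.
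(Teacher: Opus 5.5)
Your proposal is correct and uses the same parity-counting argument as the paper. In both, only $B$–$B$ and $\tB$–$\tB$ crossings contribute. You write out explicitly the case check (including the odd total $n=3$ at $|j-k|=1$) that the paper compresses into the statement that the intervals $[j,j+2]$ and $[k,k+2]$ overlap without being identical.
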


\begin{proof}
Only the Ising-algebra signs are used.  The relation for $B_j,B_k$ is one of
the defining relations.  Since
$A_j=\ii\lambda_jB_jB_{j+1}\tB_j$ and all $B$'s commute with all $\tB$'s,
commuting $B_k$ through $A_j$ only counts the signs produced by
$B_jB_{j+1}$.  There is one such sign exactly for
$k\in\{j-1,j,j+1,j+2\}$, which proves \eqref{eq:AB-internal}.

For $A_j$ and $A_k$, the $B$-part contributes the parity of the adjacent
crossings between the two sets $\{j,j+1\}$ and $\{k,k+1\}$, while the
$\tB$-part contributes the parity of the crossing between $\tB_j$ and
$\tB_k$.  The total parity is odd exactly when the intervals
$[j,j+2]$ and $[k,k+2]$ overlap without being identical, i.e. when
$1\leq |j-k|\leq 2$.  This proves \eqref{eq:AA-internal}.  The tilded
case is identical with the two Ising chains exchanged.
\end{proof}

From these relations we can draw the frustration graphs. We obtain examples for the so-called {interval graphs}. An
interval graph is defined as follows: vertices correspond to intervals on the real line, and we draw edges between two
vertices if the corresponding two intervals overlap. Let us now introduce the following intervals for the generators in
$H_1$:
\begin{equation}
I(B_j)=[j,j+1],
\qquad
I(A_j)=[j,j+2].
\label{eq:formal-intervals}
\end{equation}
We can see that the frustration graph of $H_1$ is the interval graph of the intervals
\eqref{eq:formal-intervals}.  The same graph governs $H_2$. As an illustration, see Figure \ref{fig:fr1}, which depicts
the graph for $M=6$.

In a later section, an important role will be played by the so-called independence number of the graphs, which is defined
as the maximal size of a subset of the vertices such that no two vertices are connected. It is easy to see that in these
graphs the independence number is $\cS=\left\lfloor\frac{M+1}{2}\right\rfloor$.

\subsection{Cross-relations and commutativity criterion}
\label{subsec:cross-relations}

We now list the commutation relations between the terms in the two different Hamiltonians.

\begin{lemma}[Algebraic cross-relations]
\label{lem:algebraic-cross-relations}
For all allowed indices,
\begin{equation}
[B_j,\tB_k]=0,
\qquad
[A_j,\tA_k]=0.
\label{eq:AAtilde-BBtilde-commute}
\end{equation}
The only mixed anti-commuting pairs are
\begin{equation}
\{A_j,\tB_{j-1}\}=0,
\qquad
\{A_j,\tB_{j+1}\}=0,
\qquad
\{B_j,\tA_j\}=0,
\qquad
\{B_{j+2},\tA_j\}=0.
\label{eq:mixed-anti-commuting-pairs}
\end{equation}
Moreover, the paired products have the following common normal forms:
\begin{align}
A_j\tB_{j+1}
&=\ii\lambda_j\,B_jB_{j+1}\tB_j\tB_{j+1},
&
B_j\tA_j
&=-\ii\tlambda_j\,B_jB_{j+1}\tB_j\tB_{j+1},
\label{eq:weighted-cross-square-1}
\\
A_{j+1}\tB_j
&=-\ii\lambda_{j+1}\,B_{j+1}B_{j+2}\tB_j\tB_{j+1},
&
B_{j+2}\tA_j
&=\ii\tlambda_j\,B_{j+1}B_{j+2}\tB_j\tB_{j+1}.
\label{eq:weighted-cross-square-2}
\end{align}
In the global-$\lambda$ family these imply the exact square identities
\begin{equation}
A_j\tB_{j+1}=-B_j\tA_j,
\qquad
A_{j+1}\tB_j=-B_{j+2}\tA_j.
\label{eq:global-cross-square-identities}
\end{equation}
\end{lemma}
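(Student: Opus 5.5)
The plan is a direct sign count. We use only the defining relations \eqref{eq:B-path-relations}, \eqref{eq:tB-path-relations}, \eqref{eq:cross-B-commuting} and the definitions \eqref{Adef}, in the same spirit as the proof of Lemma~\ref{lem:internal-sign-rules}. Every operator involved is a scalar multiple of a product of generators. Two generators either commute or anti-commute, and the scalar prefactors $\ii\lambda_j$, $-\ii\tlambda_k$ play no role in commutation. Hence, for two such products, the commutation sign is $(-1)^N$, where $N$ counts the pairs (one factor from each product) that are adjacent within the same Ising family. Pairs of factors from different families never contribute, by \eqref{eq:cross-B-commuting}.

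\textbf{Step 1: single-generator relations.} The relation $[B_j,\tB_k]=0$ is simply \eqref{eq:cross-B-commuting}.
\begin{itemize}
\item For $A_j$ and $\tB_k$, the only factor of $A_j$ that can anti-commute with $\tB_k$ is $\tB_j$. This happens iff $|j-k|=1$, giving the first two relations in \eqref{eq:mixed-anti-commuting-pairs}.
\item For $B_j$ and $\tA_k$, the only relevant factor of $\tA_k$ is $B_{k+1}$. It anti-commutes with $B_j$ iff $|j-k-1|=1$, i.e.\ $j=k$ or $j=k+2$, giving the remaining two relations.
\end{itemize}

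\textbf{Step 2: $[A_j,\tA_k]=0$.} Write $d=k-j$. The untilded contribution to $N$ is the number of $m\in\{j,j+1\}$ with $|m-(k+1)|=1$, which equals $\delta_{|d+1|,1}+\delta_{|d|,1}$. The tilded contribution is the number of $m\in\{k,k+1\}$ with $|m-j|=1$, which equals $\delta_{|d|,1}+\delta_{|d+1|,1}$. The two counts coincide, so $N$ is even and the operators commute for every $d$. I expect this step to be the only one needing care: the point is to organize the count so that this symmetry between the two contributions becomes visible, rather than running through a case analysis over $d$.

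\textbf{Step 3: normal forms.} We reorder the factors using that $B$'s commute with $\tB$'s, together with the single anti-commutations $\tB_{j+1}\tB_j=-\tB_j\tB_{j+1}$ and $B_{j+2}B_{j+1}=-B_{j+1}B_{j+2}$.
\begin{itemize}
\item $A_j\tB_{j+1}$ is already in normal form.
\item $B_j\tA_j=-\ii\tlambda_j B_j\tB_j\tB_{j+1}B_{j+1}$. Moving $B_{j+1}$ to the left past the two $\tB$'s costs no sign.
\item $A_{j+1}\tB_j=\ii\lambda_{j+1}B_{j+1}B_{j+2}\tB_{j+1}\tB_j$. Swapping the last two factors produces one sign.
\item $B_{j+2}\tA_j=-\ii\tlambda_j B_{j+2}B_{j+1}\tB_j\tB_{j+1}$, after moving $B_{j+1}$ left past the $\tB$'s. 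Swapping $B_{j+2}B_{j+1}$ then produces one sign, giving $+\ii\tlambda_j$.
\end{itemize}

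\textbf{Step 4: global-$\lambda$ case.} Setting $\lambda_j=\tlambda_j=\lambda_{j+1}=\lambda$ in \eqref{eq:weighted-cross-square-1} and \eqref{eq:weighted-cross-square-2}, the paired products in each line carry opposite signs in front of the same normal form. This gives \eqref{eq:global-cross-square-identities}.
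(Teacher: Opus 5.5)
Your proposal is correct and takes essentially the paper's route: a parity count over anti-commuting same-family factor pairs gives the (anti-)commutation relations, and direct reordering gives the common normal forms. The only cosmetic difference is in the $[A_j,\tA_k]$ step: the paper lists the cases $k=j-2,\dots,j+1$ where both parities are odd, while you write the two counts as the same sum of Kronecker deltas in $d=k-j$.
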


\begin{proof}
The relation $[B_j,\tB_k]=0$ is part of the defining algebra.  For
$[A_j,\tA_k]$, move the $B$-part $B_jB_{j+1}$ through the single $B$-factor
$B_{k+1}$ in $\tA_k$, and move the single $\tB$-factor $\tB_j$ through the
$\tB$-part $\tB_k\tB_{k+1}$.  The two parities agree for every $j,k$:
for $k=j-2,j-1,j,j+1$ both parities are odd, and otherwise both are even.
Hence the total parity is always even.

The list \eqref{eq:mixed-anti-commuting-pairs} follows by the same sign
count.  For example, $A_j$ contains one tilded factor, namely $\tB_j$, so it
anti-commutes with $\tB_k$ precisely when $k=j\pm1$.  Similarly, $\tA_j$
contains the single untilded factor $B_{j+1}$, so it anti-commutes with
$B_k$ precisely when $k=j$ or $k=j+2$.

Finally, putting the paired products into the same normal order gives
\begin{align}
A_j\tB_{j+1}
&=\ii\lambda_jB_jB_{j+1}\tB_j\tB_{j+1},
\\
B_j\tA_j
&=-\ii\tlambda_jB_j\tB_j\tB_{j+1}B_{j+1}
=-\ii\tlambda_jB_jB_{j+1}\tB_j\tB_{j+1},
\end{align}
which proves \eqref{eq:weighted-cross-square-1}.  Likewise,
\begin{align}
A_{j+1}\tB_j
&=\ii\lambda_{j+1}B_{j+1}B_{j+2}\tB_{j+1}\tB_j
=-\ii\lambda_{j+1}B_{j+1}B_{j+2}\tB_j\tB_{j+1},
\\
B_{j+2}\tA_j
&=-\ii\tlambda_jB_{j+2}\tB_j\tB_{j+1}B_{j+1}
=\ii\tlambda_jB_{j+1}B_{j+2}\tB_j\tB_{j+1},
\end{align}
which proves \eqref{eq:weighted-cross-square-2}.
\end{proof}

As an application we have the following:

\begin{theorem}[Algebraic mutual commutativity]
\label{thm:mutual-commutativity}
We have
\begin{equation}
[H_1,H_2]=0
\end{equation}
if and only if
\begin{equation}
\lambda_j=\tlambda_j
\qquad (1\leq j\leq M-1)
\label{eq:lambda-tilde-same-index}
\end{equation}
and
\begin{equation}
\lambda_{j+1}=\tlambda_j
\qquad (1\leq j\leq M-2).
\label{eq:lambda-tilde-shifted-index}
\end{equation}
Equivalently, all local coefficients are equal to one global parameter
$\lambda$.
\end{theorem}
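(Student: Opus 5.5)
The plan is to expand $[H_1,H_2]$ term by term and use Lemma \ref{lem:algebraic-cross-relations} to discard everything that commutes. Every term of $H_1$ and every term of $H_2$ is a scalar multiple of a product of generators, so each mixed pair either commutes or anti-commutes. Whenever $XY=-YX$ we have $[X,Y]=2XY$. By \eqref{eq:AAtilde-BBtilde-commute} and \eqref{eq:mixed-anti-commuting-pairs}, the only surviving contributions are therefore
\begin{equation*}
[H_1,H_2]=2\sum_{j=1}^{M-1}\bigl(A_j\tB_{j+1}+B_j\tA_j\bigr)+2\sum_{j=1}^{M-2}\bigl(A_{j+1}\tB_j+B_{j+2}\tA_j\bigr).
\end{equation*}
Some care with the index ranges is needed here. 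The pair $\{A_j,\tB_{j-1}\}$ only exists for $2\le j\le M-1$, and after the shift $j\to j+1$ it becomes the pair $A_{j+1}\tB_j$ with $1\le j\le M-2$. This matches exactly the range of $B_{j+2}\tA_j$, and the pairs $A_j\tB_{j+1}$ and $B_j\tA_j$ both run over $1\le j\le M-1$. So the anti-commuting pairs group naturally into these two families of couples.

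Next I would insert the normal forms \eqref{eq:weighted-cross-square-1}--\eqref{eq:weighted-cross-square-2}. This gives
\begin{equation*}
[H_1,H_2]=2\ii\sum_{j=1}^{M-1}(\lambda_j-\tlambda_j)\,P_j\;-\;2\ii\sum_{j=1}^{M-2}(\lambda_{j+1}-\tlambda_j)\,Q_j,
\end{equation*}
with the quartic monomials
\begin{equation*}
P_j=B_jB_{j+1}\tB_j\tB_{j+1},\qquad Q_j=B_{j+1}B_{j+2}\tB_j\tB_{j+1}.
\end{equation*}
Sufficiency is then immediate: if \eqref{eq:lambda-tilde-same-index} and \eqref{eq:lambda-tilde-shifted-index} hold, every coefficient vanishes. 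Chaining $\lambda_j=\tlambda_j=\lambda_{j+1}$ shows that all local coefficients equal a single global $\lambda$, which gives the stated equivalence.

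For necessity I must show that the $2M-3$ operators $P_j,Q_j$ are linearly independent, so that $[H_1,H_2]=0$ forces every coefficient to vanish. This is the main obstacle, because the statement is about the abstract algebra, and linear independence needs some nondegeneracy input. I would use the faithful Ising representation of Subsection \ref{subsec:ising-representation}. Equivalently, one can use the standard fact that in a product of two Clifford-type (graph-Clifford) algebras built on path graphs, the ordered monomials labelled by distinct subsets of generators are linearly independent. Concretely, $P_j$ and $Q_j$ are labelled by the pairs of subsets $(\{j,j+1\},\{j,j+1\})$ and $(\{j+1,j+2\},\{j,j+1\})$ of the untilded and tilded indices. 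All of these labels are pairwise distinct.

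To extract the coefficients I would work in the Ising representation with the normalized trace. For two distinct monomials, $R^{-1}R'$ is, up to a scalar, a nontrivial Pauli string, since the subsets differ. Its trace therefore vanishes, while $\operatorname{tr}(R^{-1}R)=1$. Multiplying $[H_1,H_2]$ by $P_j^{-1}$ or $Q_j^{-1}$ and taking the trace then isolates each coefficient. The nonzero assumption on the $\beta$'s guarantees that the monomials are invertible. Setting each isolated coefficient to zero yields \eqref{eq:lambda-tilde-same-index} and \eqref{eq:lambda-tilde-shifted-index}.
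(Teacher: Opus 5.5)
Your proposal is correct and follows the paper's proof. It makes the same reduction of $[H_1,H_2]$ to the local obstructions $2\ii(\lambda_j-\tlambda_j)P_j$ and $2\ii(\tlambda_j-\lambda_{j+1})Q_j$, gets sufficiency by inspection, and gets necessity from the linear independence of these normal monomials. The one place you go beyond the paper is that you justify this independence (via the normalized trace in the faithful Ising representation) instead of just asserting it. That is worthwhile, because in a non-faithful representation --- e.g.\ an irreducible one for $M=3$, where the central elements $B_1B_3$ and $\tB_1\tB_3$ act as scalars --- the monomials $P_j,Q_j$ become proportional and necessity genuinely fails.
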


\begin{proof}
By Lemma~\ref{lem:algebraic-cross-relations}, the only non-zero mixed
commutators are the four families in
\eqref{eq:mixed-anti-commuting-pairs}.  Since each of these pairs
anti-commutes, its commutator is twice its ordered product.  Hence the
commutator is the sum of the local obstructions
\begin{equation}
2\left(A_j\tB_{j+1}+B_j\tA_j\right)
\end{equation}
and
\begin{equation}
2\left(A_{j+1}\tB_j+B_{j+2}\tA_j\right),
\end{equation}
with boundary terms included only when the indices exist.  Expanding the cubic
definitions
gives
\begin{align}
2\left(A_j\tB_{j+1}+B_j\tA_j\right)
&=
2\ii\left(\lambda_j-\tlambda_j\right)
B_jB_{j+1}\tB_j\tB_{j+1},
\label{eq:first-obstruction}
\\
2\left(A_{j+1}\tB_j+B_{j+2}\tA_j\right)
&=
2\ii\left(\tlambda_j-\lambda_{j+1}\right)
B_{j+1}B_{j+2}\tB_j\tB_{j+1}.
\label{eq:second-obstruction}
\end{align}
Therefore \eqref{eq:lambda-tilde-same-index} and
\eqref{eq:lambda-tilde-shifted-index} are sufficient for commutativity.  In the
abstract algebra the normal monomials in \eqref{eq:first-obstruction} and
\eqref{eq:second-obstruction} are independent, so their coefficients must
vanish.  This gives
$\lambda_j=\tlambda_j$ and $\lambda_{j+1}=\tlambda_j$ and proves necessity.
\end{proof}

\section{Independent free-fermion structure}
\label{sec:independent-ff}

We now investigate the two Hamiltonians separately and show that they are integrable and free-fermionic.

We will use the techniques of \cite{ffd,fermions-behind-the-disguise}.
In this section we will proceed as if the terms of $H_1$ and $H_2$ were algebraically independent, and then we can use
the calculations of \cite{ffd,fermions-behind-the-disguise} directly. The key point is that we focus on either $H_1$ or
$H_2$, but here we will not yet treat their linear combinations. This implies that we can choose either the set
$\{\{B_j\}_{j=1,\dots,M},\{A_k\}_{k=1,\dots,M-1}\}$
or
$\{\{\tilde B_j\}_{j=1,\dots,M},\{\tilde A_k\}_{k=1,\dots,M-1}\}$
as independent generators.

This procedure is justified as long as we do not consider $H_1$ and $H_2$ at the same time (we do not diagonalize their
linear combinations).

\subsection{Free fermions behind the disguise}
\label{subsec:ffbd-general-construction}

We now recall the key results of \cite{ffd,fermions-behind-the-disguise}.  The reader who is familiar with this
construction may skip this subsection.

Let
\begin{equation}
H=\sum_{v\in V} g_v
\label{eq:general-graph-H}
\end{equation}
be a Hamiltonian whose terms are Hermitian graph-Clifford generators \cite{sajat-solving-ffd-1},
\begin{equation}
g_v=g_v^\dagger,\qquad g_v^2=w_v^2\I ,
\end{equation}
and assume that each pair of distinct generators either commutes or
anti-commutes.  The frustration graph \(G=G(H)\) has vertex set \(V\), with
an edge between \(v\) and \(v'\) precisely when
\begin{equation}
\{g_v,g_{v'}\}=0 .
\end{equation}
Thus non-adjacent vertices correspond to commuting generators.

A hole is an induced cycle of length at least four, and an even hole is a
hole with an even number of vertices.  A claw is an induced copy of
the graph \(K_{1,3}\) (see Fig. \ref{fig:claw}).  We say that \(G\) is ECF if it is both even-hole-free and
claw-free.  We shall also use a simplicial clique \(K\subset V\): this is a
non-empty clique such that, for every \(v\in K\), the set
\(N_G(v)\setminus K\) induces a clique.  For ECF graphs such a simplicial
clique exists, and it gives an edge operator \(\chi\) satisfying
\begin{equation}
\chi=\chi^\dagger,\qquad \chi^2=\I,\qquad
\{\chi,g_v\}=0\quad (v\in K),\qquad
[\chi,g_v]=0\quad (v\notin K).
\label{eq:general-edge-operator}
\end{equation}
The operator \(\chi\) may be adjoined algebraically if it is not already
present in the representation.

\begin{figure}[t]
  \centering
  \begin{tikzpicture}[
        scale=0.5,
    transform shape,
    vertex/.style={circle, draw, fill=white, inner sep=2pt, minimum size=7mm},
    edge/.style={thick}
]

\node[vertex] (c) at (0,0) {};
\node[vertex] (a) at (-1.5,1.2) {};
\node[vertex] (b) at (1.5,1.2) {};
\node[vertex] (d) at (0,-1.5) {};

\draw[edge] (c) -- (a);
\draw[edge] (c) -- (b);
\draw[edge] (c) -- (d);

\end{tikzpicture}
  
  \caption{The claw graph}
  \label{fig:claw}
\end{figure}

Let \(\cI(G)\) be the set of independent sets of \(G\), and let
\(\cI_r(G)\subset\cI(G)\) denote the independent sets of size \(r\).  We write
\[
\cS(G)=\max_{S\in\cI(G)}|S|
\]
for the independence number.  For \(S\in\cI(G)\) define
\begin{equation}
g_S=\prod_{v\in S}g_v,\qquad g_\emptyset=\I .
\end{equation}
The order in this product is immaterial, because all generators belonging to
an independent set commute.  The independent-set charges are
\begin{equation}
Q_G^{(r)}=\sum_{S\in\cI_r(G)}g_S,
\qquad
r=0,\ldots,\cS(G),
\label{eq:general-independent-set-charges}
\end{equation}
with \(Q_G^{(0)}=\I\) and \(Q_G^{(1)}=H\).  If \(G\) is claw-free, then
\begin{equation}
[Q_G^{(r)},Q_G^{(s)}]=0
\qquad
(r,s=0,\ldots,\cS(G)).
\label{eq:general-independent-set-commuting}
\end{equation}
Equivalently, the transfer matrix
\begin{equation}
T_G(u)=\sum_{r=0}^{\cS(G)}(-u)^r Q_G^{(r)}
      =\sum_{S\in\cI(G)}(-u)^{|S|}g_S
\label{eq:general-graph-transfer-matrix}
\end{equation}
forms a commuting family,
\begin{equation}
[T_G(u),T_G(v)]=0 .
\end{equation}
It is normalized by
\begin{equation}
T_G(0)=\I,\qquad -\partial_uT_G(0)=H .
\end{equation}
The weighted independence
polynomial is
\begin{equation}
P_G(x)=
\sum_{S\in\cI(G)}
(-x)^{|S|}
\prod_{v\in S}w_v^2 .
\label{eq:general-weighted-independence-polynomial}
\end{equation}
If \(G\) is ECF, the transfer matrix obeys the inversion relation
\begin{equation}
T_G(u)T_G(-u)=P_G(u^2)\I .
\label{eq:general-graph-inversion}
\end{equation}
For generic non-zero weights we write
\begin{equation}
P_G(x)=\prod_{k=1}^{\cS(G)}\left(1-\eps_k^2x\right),
\qquad
u_k=\eps_k^{-1}.
\label{eq:general-polynomial-factorization}
\end{equation}
The numbers \(\eps_k\) are the one-particle energies.

The hidden fermionic operators are obtained by dressing the edge operator
with the transfer matrix:
\begin{equation}
\Psi_k
=
C_k\,T_G(-u_k)\chi T_G(u_k),
\qquad
\Psi_{-k}
=
C_k\,T_G(u_k)\chi T_G(-u_k),
\qquad
k=1,\ldots,\cS(G).
\label{eq:general-hidden-fermions}
\end{equation}
The constants \(C_k\) are chosen so that the canonical anti-commutation
relations hold,
\begin{equation}
\{\Psi_k,\Psi_\ell\}=\delta_{k+\ell}\I .
\label{eq:general-CAR}
\end{equation}
The normalization factor is
\begin{equation}
  \label{norma}
C_k^{-2}
=
-16u_k^2\,
P_{G\setminus K}(u_k^2)\,
P_G'(u_k^2),
\end{equation}
where \(P_{G\setminus K}\) is the polynomial
\eqref{eq:general-weighted-independence-polynomial} for the induced graph
obtained by deleting the simplicial clique \(K\).

The operators \(\Psi_{\pm k}\) are ladder operators for \(H\):
\begin{equation}
[H,\Psi_{\pm k}]=\pm 2\eps_k\,\Psi_{\pm k} .
\label{eq:general-ladder-relation}
\end{equation}
Moreover, the Hamiltonian itself is reconstructed as the free-fermion
quadratic form
\begin{equation}
H=\sum_{k=1}^{\cS(G)}
\eps_k[\Psi_k,\Psi_{-k}].
\label{eq:general-H-reconstruction}
\end{equation}
Thus
\begin{equation}
\Spec(H)
=
\left\{
\sum_{k=1}^{\cS(G)}s_k\eps_k
\;:\;
s_k=\pm1
\right\},
\label{eq:general-free-spectrum}
\end{equation}
up to the homogeneous degeneracy coming from degrees of freedom not resolved
by these \(\cS(G)\) hidden fermions.

\subsection{The origin of exponential degeneracies}

\label{sec:origin}

Here we explain why it is difficult to find FFD-type models that are free from exponential degeneracies, based on the
results quoted in the last subsection. The key observation is that in such models the number of fermionic eigenmodes is
given by $\cS$, the independence number of the frustration graph, which is typically smaller than the number of qubits
needed for the representations of the algebras.

If a graph-Clifford algebra has $M$ generators, and it
describes a lattice spin model, then $\cS\sim M$. In a connected claw-free graph the
maximal possible asymptotic $\cS/M$ ratio is $1/2$. This is a classic result in graph theory
\cite{Sumner1974GraphsWith1Factors,LasVergnas1975NoteMatchings}: the maximum value of $\cS$ for a connected claw-free
graph is $\lceil M/2 \rceil$.
It follows that the maximal number of
fermionic eigenmodes is $\lceil M/2 \rceil$.
In the models that appeared in the literature the bound was reached only by the Ising/XY families, whose frustration
graphs are the path or circle graphs, for open or periodic boundary conditions, respectively.
In other FFD-type models that have appeared in the literature, the asymptotic $\cS/M$ ratio was smaller; in the case of
the FFD model, this ratio is 1/3. 

If a graph-Clifford algebra is non-degenerate (has a trivial center), or the dimension of the center is bounded, then
the algebra can be represented on $N=M/2+\ordo(1)$ qubits \cite{sajat-solving-ffd-1}. There are no smaller
representations.

This implies that in every case in which the asymptotic $\cS/M$ ratio is smaller than $1/2$, there will be exponential
degeneracies due
to the missing number of eigenmodes. In the case of the FFD model the degeneracies were recently tied to new families of
hidden fermions \cite{eric-lorenzo-ffd-1}. 

These arguments show that for many graph-Clifford algebras which are more complicated than the Ising-type algebras one
would need an extensive number of independent central elements, so that the dimension of the representations can be
lowered. In parallel, one might need to work with generalizations of the framework of
\cite{ffd,fermions-behind-the-disguise}, in order to avoid the strong requirements for the ECF graphs.

We will see that our model uses both mechanisms together to avoid the appearance of the exponential degeneracies. We
will return to this issue in the Conclusions, see Section \ref{sec:concl}.

Finally, we also remark that the upper bound for the independence number is reached in many other connected claw-free graphs
or graph families. However, the corresponding models might not have an interpretation as a spin-chain
model, or they might be related to Jordan-Wigner solvable models by simple unitary transformations.
This open direction is left for further research.

\subsection{Frustration graphs and simplicial cliques in our model}
\label{subsec:frustration-graphs}

We now apply the formalism presented above to $H_1$ and $H_2$.

Let $\Gamma_1$ be the graph whose vertices are the generators of $H_1$, with
an edge between two vertices exactly when the corresponding algebra elements
anti-commute.  Lemma~\ref{lem:internal-sign-rules} identifies $\Gamma_1$ with a particular
interval graph. The frustration graph for $H_2$ is identical.

Interval graphs are chordal, hence contain no induced cycle of length at
least four.  In the present interval family they are also claw-free: one
interval of length two or three cannot meet three pairwise disjoint intervals
of length at least two.  Therefore $\Gamma_1$ is ECF.

We now discuss the issue of the algebraic relations between the various terms of the two Hamiltonians. Our main
claim is that we can immediately apply the results of \cite{ffd,fermions-behind-the-disguise} if we treat $H_1$ and
$H_2$ independently. The reason for this is that although the operators \(A_j\) are products in the larger algebra, the
family \(\{B_j,A_j\}\) has no further algebraic relations.  This is shown as follows. After rescaling
the generators to involutions, a monomial
\(\prod_j B_j^{x_j}\prod_j A_j^{y_j}\), with \(x_j,y_j\in\{0,1\}\), has
tilded exponents \(y_j\) in the normal form of the two Ising algebras.
Hence a scalar monomial must have all \(y_j=0\), and then the untilded
exponents force all \(x_j=0\).  Thus the subalgebra generated by
\(\{B_j,A_j\}\) is the graph-Clifford algebra with frustration graph
\(\Gamma_1\).  The tilded family is analogous.

These arguments imply that the Hamiltonians $H_1$ and $H_2$ separately possess an extensive number of conserved
quantities, and they can be solved by free-fermionic operators.

The construction of \cite{ffd,fermions-behind-the-disguise} uses a so-called edge operator, which is associated with a
simplicial clique in the graph. The left boundary pair
\begin{equation}
K_1=\{A_1,B_1\}
\end{equation}
is a simplicial clique for $\Gamma_1$.  This means that we need to find an 
operator $\chi_1=\chi_1^\dagger$ with $\chi_1^2=\I$ such that
\begin{equation}
\{\chi_1,A_1\}=\{\chi_1,B_1\}=0,
\qquad
[\chi_1,g]=0
\quad \text{for all other generators }g\text{ of }H_1.
\label{eq:chi1-edge-relations}
\end{equation}
In the abstract treatment such an operator can be chosen as an additional generator, and in some cases it can be chosen
to lie within the algebra itself \cite{sajat-solving-ffd-1}. We do not discuss these subtleties at this point; instead
we state only that a suitable $\chi_1$ can be found in our concrete representations. This is enough to construct the free-fermionic
operators that solve the Hamiltonians.

For $H_2$ it is natural to choose the simplicial clique to lie at the other boundary:
\begin{equation}
K_2=\{\tA_{M-1},\tB_M\}.
\end{equation}
In this way, we can construct mutually commuting or anti-commuting ladder operators for the two Hamiltonians, see Section
\ref{sec:mutual-ff}. If we chose $K_1$ as the simplicial clique for $H_2$, then we would not be able to construct
cross-compatible operators for $\lambda\ne 0$, see the remark at the end of Section \ref{sec:mutual-ff}.

\subsection{Transfer matrices}
\label{subsec:transfer-matrices}

We now derive transfer matrices using the methods of \cite{ffd,fermions-behind-the-disguise}.
We will denote by $0<m\le M$ the subalgebras corresponding to smaller spin chains, and we will derive recursions in
$m$. The full chain corresponds to
$m=M$.

Consider $m$ short generators, and let $\cI_m$ be the set of independent
sets in the interval family
\[
\{I(B_j)\}_{j=1}^{m}\cup\{I(A_j)\}_{j=1}^{m-1}.
\]
For $m=0,1$ the second set is empty.
For $S\in\cI_m$, let $g_1(S)$ be the product of the corresponding
$H_1$ generators, ordered by increasing left endpoint.  Let $g_2(S)$ be the
same ordered product with all generators tilded.  The order is immaterial
inside an independent set, because its generators commute.

\begin{definition}
For $0\leq m\leq M$, define the prefix transfer matrices by
\begin{equation}
T_{1,m}(u)=\sum_{S\in\cI_m}(-u)^{|S|}g_1(S),
\qquad
T_{2,m}(u)=\sum_{S\in\cI_m}(-u)^{|S|}g_2(S).
\label{eq:FFBD-transfer-matrices}
\end{equation}
The full transfer matrices are the $m=M$ members.  They are normalized by
\begin{equation}
-\partial_uT_{1,M}(0)=H_1,
\qquad
-\partial_uT_{2,M}(0)=H_2.
\end{equation}
\end{definition}

\begin{remark}[]
  The order of the transfer matrices in $u$ is determined by the independence number of the frustration graph,
  $\cS=\left\lfloor\frac{M+1}{2}\right\rfloor$.
\end{remark}

\begin{proposition}[Boundary recursion]
\label{prop:transfer-recursion-new}
For $m\geq3$,
\begin{align}
T_{1,m}(u)&=T_{1,m-1}(u)-uB_mT_{1,m-2}(u)-uA_{m-1}T_{1,m-3}(u),
\label{eq:T1-recursion-new}
\\
T_{2,m}(u)&=T_{2,m-1}(u)-u\tB_mT_{2,m-2}(u)-u\tA_{m-1}T_{2,m-3}(u),
\label{eq:T2-recursion-new}
\end{align}
with initial conditions
\begin{equation}
T_{a,0}(u)=1,
\qquad
T_{1,1}(u)=1-uB_1,
\qquad
T_{2,1}(u)=1-u\tB_1,
\end{equation}
and
\begin{equation}
T_{1,2}(u)=1-u(B_1+B_2+A_1),
\qquad
T_{2,2}(u)=1-u(\tB_1+\tB_2+\tA_1).
\end{equation}
\end{proposition}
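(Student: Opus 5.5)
The recursion will follow from a bijective decomposition of the independent sets $\cI_m$ according to which interval, if any, covers the right endpoint $m+1$. The intervals in the family for $\cI_m$ are $I(B_j)=[j,j+1]$ for $j\le m$ and $I(A_j)=[j,j+2]$ for $j\le m-1$. The only intervals containing the point $m+1$ are $I(B_m)=[m,m+1]$ and $I(A_{m-1})=[m-1,m+1]$, and they overlap each other. So every $S\in\cI_m$ falls into exactly one of three disjoint classes:
\begin{itemize}
\item[(a)] $S$ contains neither $B_m$ nor $A_{m-1}$;
\item[(b)] $S$ contains $B_m$;
\item[(c)] $S$ contains $A_{m-1}$.
\end{itemize}

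In class (a), every interval of $S$ lies in $[1,m]$. The members of the family with this property are exactly the $B_j$ with $j\le m-1$ and the $A_j$ with $j\le m-2$, that is, the family defining $\cI_{m-1}$. Class (a) is therefore precisely $\cI_{m-1}$.

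In class (b), $S=\{B_m\}\cup S'$, where the intervals of $S'$ must avoid $[m,m+1]$ (touching at an endpoint counts as overlap for closed intervals, matching Lemma~\ref{lem:internal-sign-rules}). By \eqref{eq:AB-internal} and \eqref{eq:BB-internal}, $B_m$ anti-commutes with $B_{m\pm1}$ and with $A_{m-2},A_{m-1}$, and commutes with everything else. Hence $S'$ ranges over exactly the independent sets of $\{B_j\}_{j\le m-2}\cup\{A_j\}_{j\le m-3}$, which is $\cI_{m-2}$.

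In class (c), $S=\{A_{m-1}\}\cup S'$, where $S'$ must avoid $[m-1,m+1]$. By \eqref{eq:AB-internal} and \eqref{eq:AA-internal}, $A_{m-1}$ anti-commutes with $B_{m-2},\dots,B_{m+1}$ and with $A_{m-3},\dots,A_{m+1}$ (other than itself). The remaining generators are $B_j$ with $j\le m-3$ and $A_j$ with $j\le m-4$, so $S'$ ranges over $\cI_{m-3}$.

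Next, the weights. Each class contributes a factor $(-u)^{|S|}=(-u)\,(-u)^{|S'|}$, and $g_1(S)$ is an ordered product. Since generators in an independent set commute, I can move $B_m$ or $A_{m-1}$ to the left of the product and write $g_1(S)=B_m\,g_1(S')$ or $g_1(S)=A_{m-1}\,g_1(S')$. Summing over the three classes gives \eqref{eq:T1-recursion-new} directly. The tilded version \eqref{eq:T2-recursion-new} follows identically, because Lemma~\ref{lem:internal-sign-rules} asserts the same sign rules for the tilded family.

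The step needing care is the lower-boundary bookkeeping for small $m$. The recursion needs $T_{1,m-3}$ with $m-3\ge0$, hence $m\ge3$. At $m=3$, class (c) must give just $\{A_2\}$: after removing $[2,4]$ the only surviving interval would be $I(B_1)=[1,2]$, which touches $[2,4]$ at the point $2$, so $S'=\emptyset$. This is consistent with $\cI_0=\{\emptyset\}$ and $T_{1,0}=1$. I would also confirm that the general index ranges ($j\le m-3$ for $B$, $j\le m-4$ for $A$) degenerate correctly here.

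The initial conditions are read off directly:
\begin{itemize}
\item $\cI_0=\{\emptyset\}$, so $T_{a,0}=1$;
\item $\cI_1=\{\emptyset,\{B_1\}\}$, so $T_{1,1}=1-uB_1$;
\item for $m=2$, the generators $B_1,B_2,A_1$ pairwise anti-commute, so the only independent sets are $\emptyset$ and singletons, giving $T_{1,2}=1-u(B_1+B_2+A_1)$.
\end{itemize}
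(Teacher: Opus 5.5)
Your proposal is correct and follows the same route as the paper. Both split each independent set according to whether it contains no generator touching the right endpoint, contains $B_m$, or contains $A_{m-1}$, which leaves a set in $\cI_{m-1}$, $\cI_{m-2}$, $\cI_{m-3}$ respectively, and both read off the initial data directly. You also spell out details the paper leaves implicit: the index ranges, the $m=3$ boundary case, and moving the boundary generator to the left inside the commuting product.
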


\begin{proof}
An independent set in $\cI_m$ is exactly one of the following: it uses no
generator touching the right endpoint of the chain, giving an independent set
in $\cI_{m-1}$; it uses the short generator $B_m$ (or $\tB_m$), in which case
the rest lies in $\cI_{m-2}$; or it uses the long generator $A_{m-1}$ (or
$\tA_{m-1}$), in which case the rest lies in $\cI_{m-3}$.  This is precisely
\eqref{eq:T1-recursion-new} and \eqref{eq:T2-recursion-new}.  The displayed
initial conditions are the direct independent-set sums for zero, one, and two
short generators.
\end{proof}

\subsection{Inversion relation and one-particle energies}
\label{subsec:inversion-independent}

For each family, the inversion relation is a purely algebraic
cancellation among independent-set monomials.

\begin{theorem}[Independent inversion relation]
\label{thm:independent-inversion}
For $a=1,2$,
\begin{equation}
T_{a,M}(u)T_{a,M}(-u)=P_{a,M}(u^2)\I,
\label{eq:independent-inversion}
\end{equation}
where $P_{a,M}$ is the weighted independence polynomial of order $\cS$, given by
\begin{equation}
P_{a,M}(x)=\sum_{S\in\cI_M}(-x)^{|S|}\prod_{g\in S}w_a(g)^2.
\label{eq:weighted-independence-general}
\end{equation}
Here $w_a(g)^2$ denotes the scalar square of the generator $g$ in the
$a$-th family.  Thus
\begin{align}
&w_1(B_j)^2=\beta_j^2,
&&w_1(A_j)^2=\alpha_j^2,
\label{eq:w1-weights}
\\
&w_2(\tB_j)^2=\tbeta_j^2,
&&w_2(\tA_j)^2=\talpha_j^2.
\label{eq:w2-weights}
\end{align}
\end{theorem}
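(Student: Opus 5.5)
We will prove the inversion relation by induction on $m$, using the boundary recursion of Proposition~\ref{prop:transfer-recursion-new}. We do not invoke the general ECF result \eqref{eq:general-graph-inversion} as a black box. It would in fact apply, since $\Gamma_1$ is ECF and the family $\{B_j,A_j\}$ generates a graph-Clifford algebra with no further relations, as shown in Subsection~\ref{subsec:frustration-graphs}. A direct recursive proof is more transparent, however, and only uses the sign rules of Lemma~\ref{lem:internal-sign-rules}. We treat $a=1$; the tilded case is identical, since the tilded family has the same internal rules. The statement is proved for every prefix $m\le M$, with $P_{1,m}$ the weighted independence polynomial of $\cI_m$. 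This polynomial obeys the scalar analogue of the recursion:
\[
P_{1,m}(x)=P_{1,m-1}(x)-x\beta_m^2P_{1,m-2}(x)-x\alpha_{m-1}^2P_{1,m-3}(x).
\]

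The core step is to strengthen the induction hypothesis. A product $T_{1,m}(u)T_{1,m}(-u)$ expanded with the recursion produces cross terms such as $T_{1,m-1}(u)B_mT_{1,m-2}(-u)$. Handling these requires moving $B_m$ and $A_{m-1}$ through shorter transfer matrices. By Lemma~\ref{lem:internal-sign-rules}, $B_m$ commutes with everything in $T_{1,m-3}$ and anticommutes only with $B_{m-1},A_{m-2},A_{m-1}$ among the generators in $T_{1,m-1}$. Likewise $A_{m-1}$ anticommutes with $B_{m-2},B_{m-1},B_m$ and with $A_{m-3},A_{m-2}$. We therefore introduce the companion objects $T_{1,m}'(u)$, obtained from $T_{1,m}(u)$ by flipping the sign of $u$ on those generators that touch the right endpoint. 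Concretely, these are the generators whose interval contains $m+1$, or contains $m$, depending on which operator is being commuted. This gives identities of the form
\[
B_mT_{1,m-1}(u)=\widehat T_{1,m-1}(u)B_m,
\]
where $\widehat T$ is expressible through the recursion in terms of $T_{1,m-2}$, $T_{1,m-3}$, $T_{1,m-4}$. Following the strategy of \cite{ffd,fermions-behind-the-disguise}, we then prove simultaneously three identities. The first is the inversion identity itself. The other two are mixed bilinear identities,
\[
T_{1,m}(u)T_{1,m-1}(-u)-T_{1,m-1}(u)T_{1,m}(-u)=\text{(explicit term)},
\]
and the analogous one with $m-2$. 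They state that certain antisymmetric combinations are proportional to $u\,B_m$ or $u\,A_{m-1}$ times products of shorter scalar-inverting pieces.

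The induction proceeds as follows. We substitute the recursion into both factors of $T_{1,m}(u)T_{1,m}(-u)$ and collect the nine products. The diagonal ones give $P_{1,m-1}$, $u^2\beta_m^2P_{1,m-2}$ and $u^2\alpha_{m-1}^2P_{1,m-3}$; for the last two we use $B_m^2=\beta_m^2$, $A_{m-1}^2=\alpha_{m-1}^2$ and the commutation of $B_m$, respectively $A_{m-1}$, past the shorter matrix. Combined with the correct signs, these reproduce the scalar recursion for $P_{1,m}$. The off-diagonal products must cancel in pairs. Here the mixed bilinear identities at level $m-1$ are used. We also need the relation $\{B_m,A_{m-1}\}=0$, so that the $B_mA_{m-1}$ cross terms cancel between the $(u,-u)$ and $(-u,u)$ orderings. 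The base cases $m=0,1,2$ are checked by hand; for instance
\[
(1-uB_1)(1+uB_1)=1-u^2\beta_1^2,
\]
and for $m=2$ the clique $\{B_1,B_2,A_1\}$ is pairwise anticommuting.

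The main obstacle is finding the correct form of the auxiliary mixed identities so that the induction closes. This is delicate because $A_{m-1}$ reaches three short generators back, so the recursion is third order. We would first guess the auxiliary identities by computing $m=3,4$ explicitly, and then verify that they propagate.

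As a fallback we can appeal directly to \eqref{eq:general-graph-inversion}. This is valid because $\Gamma_1$ is ECF and the subalgebra is the free graph-Clifford algebra on $\Gamma_1$. We would then note only that $\mathcal I_M$ is exactly $\cI(\Gamma_1)$ and that the weights are those of \eqref{eq:w1-weights}. The degree equals the independence number $\cS$, as stated.
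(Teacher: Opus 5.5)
Your fallback paragraph is exactly the paper's proof. The paper only observes that $\Gamma_1$ (and its identical tilded copy) is ECF and invokes the general inversion relation \eqref{eq:general-graph-inversion}, with $\cI_M=\cI(\Gamma_1)$ and the weights \eqref{eq:w1-weights}--\eqref{eq:w2-weights}. So your proposal does contain a complete argument at the paper's level of detail. The absence of further relations in the subalgebra is harmless but not needed for this identity, since the cancellations behind it use only the sign rules and the scalar squares.

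Your primary, inductive route, however, is only a plan. Its entire content lies in the auxiliary identities you leave to be guessed, and the one cancellation you describe concretely does not go through as stated. The $B_mA_{m-1}$ cross terms do not cancel from $\{B_m,A_{m-1}\}=0$ alone, because $A_{m-1}$ anticommutes with $B_{m-2}$ and $A_{m-3}$, which occur in $T_{1,m-2}$. Put $\Delta_k(u)=T_{1,k}(u)-T_{1,k-1}(u)=-uB_kT_{1,k-2}(u)-uA_{k-1}T_{1,k-3}(u)$, with $T_{1,-1}=T_{1,0}=\I$ and $A_0=0$, and define
\[
g_k=\Delta_k(u)\,T_{1,k-1}(-u)+T_{1,k-1}(u)\,\Delta_k(-u).
\]
Then
\[
B_mT_{1,m-2}(u)A_{m-1}T_{1,m-3}(-u)+A_{m-1}T_{1,m-3}(u)B_mT_{1,m-2}(-u)=-B_mA_{m-1}\,g_{m-2},
\]
so even this cancellation already requires a lower-level mixed identity.

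The identity that closes the induction is $g_k=0$, i.e.\ the \emph{symmetric} relation
\[
T_{1,k}(u)T_{1,k-1}(-u)+T_{1,k-1}(u)T_{1,k}(-u)=2T_{1,k-1}(u)T_{1,k-1}(-u).
\]
Your antisymmetric version is equivalent to it once its right-hand side is fixed to be $2\Delta_k(u)T_{1,k-1}(-u)$.

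To prove it, use Lemma~\ref{lem:internal-sign-rules} in the following form: $B_k$ commutes with $T_{1,k-2}$ and anticommutes with $\Delta_{k-1}$, while $A_{k-1}$ commutes with $T_{1,k-3}$ and anticommutes with both $\Delta_{k-2}$ and $\Delta_{k-1}$. Expanding $g_k$ with these rules gives
\[
g_k=-uB_k\,g_{k-1}-uA_{k-1}\,g_{k-2}+u^2A_{k-1}A_{k-2}\,g_{k-3}.
\]
Since $g_0=g_1=0$ and $g_2=-u^2\{B_1,B_2+A_1\}=0$, this gives $g_k\equiv0$. You do not need to carry the inversion relation or a separate gap-two identity through the induction. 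Consequently
\begin{align*}
T_{1,m}(u)T_{1,m}(-u)={}&T_{1,m-1}(u)T_{1,m-1}(-u)-u^2\beta_m^2\,T_{1,m-2}(u)T_{1,m-2}(-u)\\
&-u^2\alpha_{m-1}^2\,T_{1,m-3}(u)T_{1,m-3}(-u),
\end{align*}
which is the recursion for $P_{1,m}(u^2)$ with matching initial data.

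With this filled in, your route becomes a self-contained alternative to the paper's citation. It also makes explicit that only the interval sign rules and the scalar squares enter.
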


\begin{proof}
As long as we treat the two transfer matrices separately, the formalism of \cite{ffd,fermions-behind-the-disguise}
applies directly. The frustration graphs of $H_1$ and $H_2$ are identical and this graph is ECF. The statement of the
theorem follows immediately.
\end{proof}

\begin{proposition}[Inhomogeneous recursion for $P_{a,m}$]
For the polynomials defined by \eqref{eq:weighted-independence-general}
with $M$ replaced by $m$, set
\[
P_{a,0}(x)=1 .
\]
Then
\[
P_{1,1}(x)=1-x\beta_1^2,
\qquad
P_{2,1}(x)=1-x\tbeta_1^2,
\]
and
\[
P_{1,2}(x)=1-x(\beta_1^2+\beta_2^2+\alpha_1^2),
\qquad
P_{2,2}(x)=1-x(\tbeta_1^2+\tbeta_2^2+\talpha_1^2).
\]
For $m\geq 3$ one has
\begin{align}
P_{1,m}(x)
&=
P_{1,m-1}(x)
-x\,\beta_m^{\,2}\,P_{1,m-2}(x)
-x\,\alpha_{m-1}^{\,2}\,
P_{1,m-3}(x),
\\
P_{2,m}(x)
&=
P_{2,m-1}(x)
-x\,\tbeta_m^{\,2}\,P_{2,m-2}(x)
-x\,\talpha_{m-1}^{\,2}\,
P_{2,m-3}(x).
\end{align}
\end{proposition}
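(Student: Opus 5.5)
The plan is to mirror the proof of Proposition~\ref{prop:transfer-recursion-new} at the level of scalar weights. The recursion for $P_{a,m}$ follows from the same partition of independent sets, replacing each operator $g$ by the scalar $-x\,w_a(g)^2$. I treat $a=1$; the tilded case is identical with $\beta_j\to\tbeta_j$ and $\alpha_j\to\talpha_j$. Note that the vertex set underlying $P_{1,m}$ is $\{B_j\}_{j=1}^m\cup\{A_j\}_{j=1}^{m-1}$. The weights are $w_1(B_j)^2=\beta_j^2$ and $w_1(A_j)^2=\alpha_j^2$ from \eqref{eq:w1-weights}, and the adjacency is that of the interval graph with $I(B_j)=[j,j+1]$ and $I(A_j)=[j,j+2]$, by Lemma~\ref{lem:internal-sign-rules}.

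First I verify the initial conditions by direct enumeration. For $m=0$ only $\emptyset$ contributes, giving $1$. For $m=1$ the sets are $\emptyset$ and $\{B_1\}$. For $m=2$ the vertices $B_1,B_2,A_1$ are pairwise adjacent: $B_1,B_2$ anti-commute, and $A_1$ anti-commutes with $B_1$ and $B_2$ by \eqref{eq:AB-internal}. Hence only $\emptyset$ and the singletons occur, which gives the stated linear polynomials.

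For $m\geq3$ I partition $\cI_m$ according to the vertices that touch the right endpoint $m+1$, namely $B_m$ and $A_{m-1}$. These two vertices are adjacent, so at most one of them lies in an independent set. There are three cases.
\begin{enumerate}[label=(\alph*)]
\item If $S$ contains neither $B_m$ nor $A_{m-1}$, then $S\in\cI_{m-1}$. Conversely, every element of $\cI_{m-1}$ arises this way. This contributes $P_{1,m-1}(x)$.
\item If $B_m\in S$, the neighbours of $B_m$ that must be excluded are $B_{m-1}$, $A_{m-1}$ and $A_{m-2}$, by \eqref{eq:AB-internal} with $k=m\in\{j-1,\dots,j+2\}$, i.e. $j\in\{m-2,m-1\}$ (and $j=m$, which does not exist). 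The remaining vertices are exactly those of $\cI_{m-2}$, i.e. $B_1,\dots,B_{m-2}$ and $A_1,\dots,A_{m-3}$. None of these is adjacent to $B_m$: $B_{m-2}$ is not, and $A_{m-3}$ meets $B_k$ only for $k\le m-1$. This contributes $-x\beta_m^2P_{1,m-2}(x)$.
\item If $A_{m-1}\in S$, its interval $[m-1,m+1]$ excludes $B_{m-2},B_{m-1},B_m$ and $A_{m-2},A_{m-3}$. The vertices compatible with it are $B_1,\dots,B_{m-3}$ and $A_1,\dots,A_{m-4}$, which is precisely the vertex set of $\cI_{m-3}$. This contributes $-x\alpha_{m-1}^2P_{1,m-3}(x)$.
\end{enumerate}
Since each weight enters multiplicatively, summing the three cases gives the claimed recursion.

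The only point needing care is case (c) at small $m$, together with the boundary convention. When $m=3$, case (c) leaves $\cI_0=\{\emptyset\}$. I must check that ``vertices of $\cI_{m-3}$'' is consistent with the convention that $\cI_1$ contains no $A$-vertex. I also need to confirm that $B_{m-3}$, whose interval is $[m-3,m-2]$, is indeed disjoint from $[m-1,m+1]$; this is immediate from the interval description. Beyond this index bookkeeping, the argument is a direct transcription of the proof of Proposition~\ref{prop:transfer-recursion-new}. Equivalently, one could apply the scalar specialization $g\mapsto$ (its weight) to that operator recursion, which is legitimate because in each term the factored-out generator commutes with all generators in the remaining prefix.
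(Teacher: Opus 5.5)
Your proposal is correct and follows the paper's proof. Both partition the independent sets in $\cI_m$ according to whether they contain no generator touching the right endpoint, the short generator $B_m$, or the long generator $A_{m-1}$, which produces the three terms with weights $1$, $-x\beta_m^2$ and $-x\alpha_{m-1}^2$. Your explicit neighbour checks via Lemma~\ref{lem:internal-sign-rules} and your direct enumeration of the initial data for $m=0,1,2$ fill in details that the paper leaves implicit.
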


\begin{proof}
Decompose an independent set according to the right boundary.  It either
contains no generator touching the right endpoint, giving $P_{a,m-1}$; or it
contains the short interval $B_m$ or $\widetilde B_m$, leaving an
independent set in $\cI_{m-2}$; or it contains the long interval $A_{m-1}$ or
$\widetilde A_{m-1}$, leaving an independent set in
$\cI_{m-3}$.  Multiplying by the corresponding scalar squares gives the
stated recursion.
\end{proof}

Consider the factorization
\begin{equation}
P_{a,M}(x)=\prod_{k=1}^{\cS}(1-\eps_{a,k}^2x).
\label{eq:P-roots-eps}
\end{equation}
We can now state the main result of this section: using the constructions of 
\cite{ffd,fermions-behind-the-disguise} we obtain free-fermionic operators $\Psi_{a,k}$ such that
\begin{equation}
H_a=\sum_k\eps_{a,k} \mathcal{Z}_{a,k},\qquad \mathcal{Z}_{a,k}\equiv [\Psi_{a,k},\Psi_{a,-k}].
\label{eq:H-FFBD-spectral}
\end{equation}
The corresponding
 Dirac operators are obtained from an edge operator by the standard
transfer-matrix formula
\begin{equation}
\Psi_{a,\pm k}=C_{a,k}\,T_{a,M}(\mp u_{a,k})\,\chi_a\,T_{a,M}(\pm u_{a,k}),
\qquad
u_{a,k}=\eps_{a,k}^{-1},
\label{eq:Dirac-transfer-formula-new}
\end{equation}
where $C_{a,k}$ is the normalization factor obtained from the general formula \eqref{norma}.

\section{Mutual free-fermion structure}
\label{sec:mutual-ff}

We now establish the compatibility of the fermionic structures constructed
earlier separately for $H_1$ and $H_2$.
From now on, we work in the global-$\lambda$ family
\eqref{eq:cubic-A-global}, so that $[H_1,H_2]=0$.

First we prove the commutativity of the mode occupation operators $\mathcal{Z}_{a,k}$ introduced in 
\eqref{eq:H-FFBD-spectral}.

Then we prove the cross-commutativity of the transfer matrices and show that we can choose
appropriate edge operators such that the fermionic ladder operators themselves are compatible.

\subsection{Cross-commutativity of the mode occupation operators}

\label{sec:cross1}

\begin{proposition}[Mode occupations as polynomials in one Hamiltonian]
\label{prop:mode-occupation-polynomial-in-H}
Fix \(a\in\{1,2\}\).  Assume that the same-family hidden fermions have already
been constructed, so that
\begin{equation}
H_a=\sum_{r=1}^{\cS}\eps_{a,r}\cZ_{a,r},
\qquad
\cZ_{a,r}=[\Psi_{a,r},\Psi_{a,-r}],
\label{eq:H-as-Z-sum}
\end{equation}
with
\begin{equation}
\cZ_{a,r}^2=\I,
\qquad
[\cZ_{a,r},\cZ_{a,s}]=0 .
\end{equation}
Assume also the generic non-resonance condition
\begin{equation}
E_a(\sigma)\neq E_a(\tau)
\qquad
\text{for all }\sigma\neq\tau,\qquad
\sigma,\tau\in\{\pm1\}^{\cS},
\label{eq:generic-sign-sum-nondegeneracy}
\end{equation}
where
\begin{equation}
E_a(\sigma)=\sum_{r=1}^{\cS}\sigma_r\eps_{a,r}.
\end{equation}
Then each mode occupation involution \(\cZ_{a,k}\) is a polynomial in \(H_a\).
More precisely,
\begin{equation}
\cZ_{a,k}=z_{a,k}(H_a),
\label{eq:Z-as-polynomial-in-H}
\end{equation}
where the interpolation polynomial is
\begin{equation}
z_{a,k}(x)
=
\sum_{\sigma\in\{\pm1\}^{\cS}}
\sigma_k
\prod_{\substack{\tau\in\{\pm1\}^{\cS}\\ \tau\neq\sigma}}
\frac{x-E_a(\tau)}{E_a(\sigma)-E_a(\tau)} .
\label{eq:Z-interpolation-polynomial}
\end{equation}
\end{proposition}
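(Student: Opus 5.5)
The plan is to simultaneously diagonalize the commuting involutions $\cZ_{a,1},\dots,\cZ_{a,\cS}$ and then use Lagrange interpolation on the resulting spectral decomposition of $H_a$. Since the $\cZ_{a,r}$ are commuting involutions (Hermitian, because $\Psi_{a,-r}=\Psi_{a,r}^\dagger$ in the standard construction), we introduce for each sign vector $\sigma\in\{\pm1\}^{\cS}$ the operator
\begin{equation*}
\Pi_\sigma=\prod_{r=1}^{\cS}\frac{\I+\sigma_r\cZ_{a,r}}{2}.
\end{equation*}
Using only $\cZ_{a,r}^2=\I$ and mutual commutativity, each factor is an idempotent. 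Factors with the same $r$ and opposite signs multiply to zero. Hence the $\Pi_\sigma$ are mutually orthogonal idempotents, $\Pi_\sigma\Pi_\tau=\delta_{\sigma\tau}\Pi_\sigma$, and they resolve the identity, $\sum_\sigma\Pi_\sigma=\I$ (expand the product of the sums $\frac{\I+\cZ}{2}+\frac{\I-\cZ}{2}=\I$). Moreover $\cZ_{a,k}\Pi_\sigma=\sigma_k\Pi_\sigma$. Some $\Pi_\sigma$ might vanish in a given representation, but this causes no harm.

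Inserting this into \eqref{eq:H-as-Z-sum} gives the spectral decomposition
\begin{equation*}
H_a=\sum_\sigma E_a(\sigma)\Pi_\sigma,
\qquad
\cZ_{a,k}=\sum_\sigma\sigma_k\Pi_\sigma .
\end{equation*}
By orthogonality of the idempotents, any polynomial $p$ satisfies $p(H_a)=\sum_\sigma p(E_a(\sigma))\Pi_\sigma$; this is a routine induction on powers of $H_a$. It therefore suffices to find a polynomial $z_{a,k}$ with $z_{a,k}(E_a(\sigma))=\sigma_k$ for every $\sigma$.

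Under the non-resonance assumption \eqref{eq:generic-sign-sum-nondegeneracy}, the $2^{\cS}$ nodes $E_a(\sigma)$ are pairwise distinct. The Lagrange basis polynomials $\ell_\sigma(x)=\prod_{\tau\neq\sigma}\frac{x-E_a(\tau)}{E_a(\sigma)-E_a(\tau)}$ are then well defined and satisfy $\ell_\sigma(E_a(\tau))=\delta_{\sigma\tau}$. The polynomial \eqref{eq:Z-interpolation-polynomial} is exactly $\sum_\sigma\sigma_k\ell_\sigma$, so $z_{a,k}(H_a)=\sum_\sigma\sigma_k\Pi_\sigma=\cZ_{a,k}$.

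There is no real obstacle here. The only point requiring care is that the argument is purely algebraic, so it needs no diagonalizability assumption beyond the idempotent decomposition, and vanishing $\Pi_\sigma$ do not affect it. The non-resonance condition is used only to make the denominators nonzero.
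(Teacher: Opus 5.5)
Your proposal is correct and follows essentially the same route as the paper. Both build the joint projectors $\Pi_\sigma=2^{-\cS}\prod_r(\I+\sigma_r\cZ_{a,r})$, use them to resolve $H_a$ and every polynomial in $H_a$, and then use Lagrange interpolation on the pairwise distinct nodes $E_a(\sigma)$ to get $z_{a,k}(H_a)=\sum_\sigma\sigma_k\Pi_\sigma=\cZ_{a,k}$.
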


\begin{proof}
We first construct the projectors that select the different sign patterns. Since the operators \(\cZ_{a,r}\) are
commuting involutions, the operators 
\begin{equation}
\Pi_a(\sigma)
=
2^{-\cS}\prod_{r=1}^{\cS}
\left(\I+\sigma_r\cZ_{a,r}\right),
\qquad
\sigma\in\{\pm1\}^{\cS},
\label{eq:Z-joint-projectors}
\end{equation}
are mutually orthogonal idempotents and satisfy
\begin{equation}
\sum_{\sigma\in\{\pm1\}^{\cS}}\Pi_a(\sigma)=\I .
\end{equation}
Moreover,
\begin{equation}
\cZ_{a,k}\Pi_a(\sigma)=\sigma_k\Pi_a(\sigma).
\label{eq:H-Z-on-projectors}
\end{equation}
Therefore, using \eqref{eq:H-as-Z-sum}
we obtain
\[
H_a\Pi_a(\sigma)
=
\left(\sum_{r=1}^{\cS}\sigma_r\eps_{a,r}\right)\Pi_a(\sigma)
=
E_a(\sigma)\Pi_a(\sigma).
\]
Consequently
\begin{equation}
H_a
=
\sum_{\sigma\in\{\pm1\}^{\cS}}
E_a(\sigma)\Pi_a(\sigma).
\label{eq:H-spectral-resolution-Pi}
\end{equation}
More generally, for every polynomial \(f\),
\begin{equation}
f(H_a)
=
\sum_{\sigma\in\{\pm1\}^{\cS}}
f(E_a(\sigma))\Pi_a(\sigma).
\label{eq:polynomial-functional-calculus-Pi}
\end{equation}
Indeed, this follows from
\(H_a^m\Pi_a(\sigma)=E_a(\sigma)^m\Pi_a(\sigma)\) and linearity.

By the non-resonance assumption \eqref{eq:generic-sign-sum-nondegeneracy},
the numbers \(E_a(\sigma)\) are pairwise distinct.  Hence the Lagrange
polynomial \eqref{eq:Z-interpolation-polynomial} is well-defined and satisfies
\begin{equation}
z_{a,k}\bigl(E_a(\sigma)\bigr)=\sigma_k
\qquad
\text{for all }\sigma\in\{\pm1\}^{\cS}.
\end{equation}
Therefore
\begin{equation}
z_{a,k}(H_a)
=
\sum_{\sigma\in\{\pm1\}^{\cS}}
z_{a,k}\bigl(E_a(\sigma)\bigr)\Pi_a(\sigma)
=
\sum_{\sigma\in\{\pm1\}^{\cS}}
\sigma_k\Pi_a(\sigma)
=
\cZ_{a,k}.
\end{equation}
This proves \eqref{eq:Z-as-polynomial-in-H}.
\end{proof}

\begin{corollary}[Cross-commutativity of the mode occupations]
\label{cor:cross-commutativity-mode-occupations}
In the global-\(\lambda\) family, and on the common non-resonant locus
\eqref{eq:generic-sign-sum-nondegeneracy} for \(a=1,2\), the mode occupation
operators of the two hidden free-fermion structures commute:
\begin{equation}
[\cZ_{1,i},\cZ_{2,j}]=0,
\qquad
i,j=1,\dots,\cS .
\label{eq:cross-Z-commutativity}
\end{equation}
\end{corollary}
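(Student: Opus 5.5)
The plan is to combine Theorem~\ref{thm:mutual-commutativity} with Proposition~\ref{prop:mode-occupation-polynomial-in-H}.

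First, in the global-$\lambda$ family the conditions \eqref{eq:lambda-tilde-same-index} and \eqref{eq:lambda-tilde-shifted-index} hold identically, so Theorem~\ref{thm:mutual-commutativity} gives $[H_1,H_2]=0$. Second, on the common non-resonant locus the hypothesis \eqref{eq:generic-sign-sum-nondegeneracy} holds for both $a=1$ and $a=2$. Proposition~\ref{prop:mode-occupation-polynomial-in-H} then yields $\cZ_{1,i}=z_{1,i}(H_1)$ and $\cZ_{2,j}=z_{2,j}(H_2)$, with the explicit Lagrange polynomials \eqref{eq:Z-interpolation-polynomial}.

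Third, since $H_1$ and $H_2$ commute, every polynomial in $H_1$ commutes with every polynomial in $H_2$. To see this, note that $[H_1^m,H_2^n]=0$ follows by moving the factors one at a time, and the general case follows by bilinearity of the commutator. Applying this to $z_{1,i}$ and $z_{2,j}$ gives $[\cZ_{1,i},\cZ_{2,j}]=0$ for all $i,j$, which is \eqref{eq:cross-Z-commutativity}.

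The only point needing care is a consistency check: the operators $\cZ_{a,k}$ of Proposition~\ref{prop:mode-occupation-polynomial-in-H} must be the same elements of the full algebra in which $H_1$ and $H_2$ live. The concern is that Section~\ref{sec:independent-ff} constructed each family by treating its generators as independent, and possibly by adjoining an edge operator $\chi_a$. This causes no difficulty, because the identity $\cZ_{a,k}=z_{a,k}(H_a)$ is an equation between operators that is derived purely from \eqref{eq:H-as-Z-sum} and the involution and commutation properties. Consequently $\cZ_{a,k}$ lies in the algebra generated by $H_a$, regardless of how $\chi_a$ was chosen. The commutation $[H_1,H_2]=0$ is itself an identity in the larger algebra containing both families, so the argument goes through in any representation where both Hamiltonians act, for example the concrete representations of Subsections~\ref{subsec:ising-representation} and~\ref{subsec:xy-representation}.
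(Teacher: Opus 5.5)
Your proposal is correct and matches the paper's proof: $[H_1,H_2]=0$ from Theorem~\ref{thm:mutual-commutativity}, $\cZ_{a,k}=z_{a,k}(H_a)$ from Proposition~\ref{prop:mode-occupation-polynomial-in-H}, and the fact that polynomials in commuting operators commute. You also note that $\cZ_{a,k}$ lies in the algebra generated by $H_a$, whatever the choice of edge operator $\chi_a$; the paper leaves this implicit, and making it explicit is a useful clarification.
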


\begin{proof}
By Theorem~\ref{thm:mutual-commutativity}, in the global-\(\lambda\) family we
have
\begin{equation}
[H_1,H_2]=0 .
\end{equation}
Using Proposition~\ref{prop:mode-occupation-polynomial-in-H}, we may write
\begin{equation}
\cZ_{1,i}=z_{1,i}(H_1),
\qquad
\cZ_{2,j}=z_{2,j}(H_2).
\end{equation}
But polynomials in commuting operators commute.  Hence
\begin{equation}
[\cZ_{1,i},\cZ_{2,j}]
=
[z_{1,i}(H_1),z_{2,j}(H_2)]
=0 .
\end{equation}
\end{proof}

\begin{corollary}[Free fermions in the linear combinations]
  The linear combinations $H_1\pm H_2$ are free-fermionic, and their
  spectrum is given by
  \begin{equation}
        \label{eq:energies}
        E(\sigma,\tau)
        =
        \sum_{r=1}^{\cS}\sigma_r\varepsilon_{1,r}
        \pm 
        \sum_{r=1}^{\cS}\tau_r\varepsilon_{2,r},
        \qquad
        \sigma_r,\tau_r\in\{\pm1\},
        \qquad
        \cS=\left\lfloor \frac{M+1}{2}\right\rfloor,
  \end{equation}
where the single-particle energies $\varepsilon_{a,r}$ with $a=1,2$ are computed from the roots of the polynomials
$P_{a,M}(x)$. 
\end{corollary}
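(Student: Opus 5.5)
The plan is to reduce the statement to a joint spectral decomposition of the $2\cS$ commuting involutions $\cZ_{1,r}$ and $\cZ_{2,r}$. We work in the global-$\lambda$ family and on the common non-resonant locus \eqref{eq:generic-sign-sum-nondegeneracy} for $a=1,2$. There, Corollary~\ref{cor:cross-commutativity-mode-occupations} together with the same-family relations $[\cZ_{a,r},\cZ_{a,s}]=0$ shows that the family $\{\cZ_{1,r}\}_{r}\cup\{\cZ_{2,r}\}_{r}$ consists of pairwise commuting involutions.

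The first step is to form the joint projectors
\[
\Pi(\sigma,\tau)=\Pi_1(\sigma)\Pi_2(\tau),
\qquad \sigma,\tau\in\{\pm1\}^{\cS},
\]
with $\Pi_a$ as in \eqref{eq:Z-joint-projectors}. Because all the factors commute, these are mutually orthogonal idempotents and they sum to $\I$. Next, we use \eqref{eq:H-as-Z-sum} for each $a$ separately: $H_1\Pi(\sigma,\tau)=E_1(\sigma)\Pi(\sigma,\tau)$ and $H_2\Pi(\sigma,\tau)=E_2(\tau)\Pi(\sigma,\tau)$. The second identity uses the fact that $\Pi_1(\sigma)$ commutes with $H_2$, which holds because $\Pi_1(\sigma)$ is a polynomial in $H_1$ by Proposition~\ref{prop:mode-occupation-polynomial-in-H}. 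Summing over $(\sigma,\tau)$ then gives
\[
H_1\pm H_2=\sum_{\sigma,\tau}\bigl(E_1(\sigma)\pm E_2(\tau)\bigr)\Pi(\sigma,\tau)
=\sum_{r=1}^{\cS}\eps_{1,r}\cZ_{1,r}\pm\sum_{r=1}^{\cS}\eps_{2,r}\cZ_{2,r}.
\]
This is a sum of $2\cS$ commuting two-level mode terms, so the spectrum is contained in the set \eqref{eq:energies}.

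The remaining point, and the main obstacle, is to show that every sign pattern is realized, i.e.\ that each $\Pi(\sigma,\tau)\neq0$. Without this, "free-fermionic" would only give an upper set for the spectrum. For a single $a$, $\Pi_a(\sigma)\ne0$ for all $\sigma$ follows from the ladder operators: $\Psi_{a,\pm k}$ flips $\cZ_{a,k}$ and commutes with $\cZ_{a,l}$ for $l\neq k$. For the joint statement I would try to show that $\Psi_{1,\pm k}$ preserves the eigenspaces of all $\cZ_{2,l}$. Pointwise commutation is not available at this stage; it is deferred to Subsection~\ref{subsec:compatible-edges}. A weaker fact suffices, namely that $\Psi_{1,\pm k}$ commutes with $H_2$, or anticommutes with it; in the latter case it maps $E_2(\tau)$ to $E_2(-\tau)$. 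This would follow from the cross-commutativity $[T_{1,M}(u),T_{2,M}(v)]=0$ together with an edge-operator choice $\chi_1$ that is compatible with $H_2$. If that route is too heavy, I would fall back on proving the statement in the form of the spectral identity above, with the sign patterns realized generically. This last step can be supported by a dimension count in the concrete representations: the $2^{2\cS}$ patterns match $2^{N}$ up to the homogeneous factor discussed in Section~\ref{sec:degs}, so the trace of each $\Pi(\sigma,\tau)$ is the same positive number. That trace computation is obtained by expanding the product defining $\Pi(\sigma,\tau)$ and checking that every nontrivial product of the $\cZ$'s is traceless.

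Finally, the $\eps_{a,r}$ are identified as $\eps_{a,r}^2=$ inverse roots of $P_{a,M}(x)$, by \eqref{eq:P-roots-eps} and Theorem~\ref{thm:independent-inversion}. The sign conventions for $\pm$ are absorbed into $\tau\to\pm\tau$.
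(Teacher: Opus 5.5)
Your core step is the paper's argument. The corollary has no separate proof there; it is read off from $H_a=\sum_r\eps_{a,r}\cZ_{a,r}$ and Corollary~\ref{cor:cross-commutativity-mode-occupations} by simultaneously diagonalizing the $2\cS$ commuting involutions, which is your joint-projector computation. Where you go further is in asking whether every pattern $(\sigma,\tau)$ actually occurs. The paper does not prove this here. It only addresses multiplicities in Section~\ref{sec:degs}, through a continuity argument in $\lambda$ from the decoupled point plus qubit counting, supported by the numerics of Appendix~\ref{app:numerics}. Your route (a) settles it rigorously, and it needs only one-sided compatibility. If $\chi_1$ commutes with all $\tB_j,\tA_j$ (e.g.\ $X_1$ or $\sigma_1^x$ from Subsection~\ref{subsec:compatible-edges}), then $[T_{1,M}(u),H_2]=0$ and $[\chi_1,H_2]=0$ give $[\Psi_{1,\pm k},H_2]=0$, hence $[\Psi_{1,\pm k},\Pi_2(\tau)]=0$ on the non-resonant locus. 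Since $\Psi_{1,k}\Psi_{1,-k}=\tfrac12(\I+\cZ_{1,k})$ and $\Psi_{1,-k}\Psi_{1,k}=\tfrac12(\I-\cZ_{1,k})$, the flips are injective, so for fixed $\tau$ all $\Pi(\sigma,\tau)$ have the same rank. Because $\operatorname{rank}\Pi_2(\tau)$ is $\tau$-independent by the single-family argument, every $\Pi(\sigma,\tau)$ has rank $2^{N-2\cS}$ on $N$ qubits. This gives you a proof of the homogeneous degeneracy asserted in Section~\ref{sec:degs}, and genuine ladder relations $[H_1\pm H_2,\Psi_{1,\pm k}]=\pm2\eps_{1,k}\Psi_{1,\pm k}$. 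The price is assuming a cross-compatible edge operator, whereas the paper's containment statement uses only $[H_1,H_2]=0$ and the single-family results. Two side remarks. First, the ``anticommutes with $H_2$'' alternative would not suffice by itself: it couples each flip of $\sigma_k$ to $\tau\to-\tau$, so from one occupied pattern it reaches only parity-correlated ones. Second, the trace fallback is not independent. Tracelessness of a mixed product $\prod_{r\in S_1}\cZ_{1,r}\prod_{r\in S_2}\cZ_{2,r}$ requires an involution such as $\Psi_{1,k}+\Psi_{1,-k}$ that anticommutes with $\cZ_{1,k}$ while commuting with every $\cZ_{2,l}$, which is the same compatibility input. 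Appealing to the multiplicities of Section~\ref{sec:degs} instead would be circular.
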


At this stage we can conveniently prove the following:

\begin{theorem}[Cross-commutativity of the transfer matrices]
\label{thm:cross-transfer-commutativity-new}
For all spectral parameters $u,v$,
\begin{equation}
[T_{1,M}(u),T_{2,M}(v)]=0.
\label{eq:cross-transfer-commutes}
\end{equation}
Consequently all higher charges generated by the two transfer matrices
commute across the two families.
\end{theorem}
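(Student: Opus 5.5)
The plan is to follow the same route as for Corollary~\ref{cor:cross-commutativity-mode-occupations}: express each transfer matrix as a function of the mode occupations of its own family, and then use the cross-commutativity of these occupations. For each $a$ the standard free-fermion machinery of \cite{ffd,fermions-behind-the-disguise} gives the factorized form
\begin{equation*}
T_{a,M}(u)=\prod_{k=1}^{\cS}\left(\I-u\,\eps_{a,k}\cZ_{a,k}\right).
\end{equation*}
I would justify this in two ways. First, $T_{a,M}(u)$ commutes with $H_a$, because $H_a=-\partial_uT_{a,M}(0)$ and $[T_{a,M}(u),T_{a,M}(v)]=0$. Second, the inversion relation of Theorem~\ref{thm:independent-inversion} together with the factorization \eqref{eq:P-roots-eps} fixes the eigenvalue on each joint eigenspace $\Pi_a(\sigma)$ to be $\prod_k(1-u\sigma_k\eps_{a,k})$.

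The cleanest version of this argument works on the generic non-resonant locus \eqref{eq:generic-sign-sum-nondegeneracy}. There $H_a$ has the simple spectral resolution \eqref{eq:H-spectral-resolution-Pi}. Since each $Q^{(r)}_a$ commutes with $H_a$, it preserves every eigenspace $\Pi_a(\sigma)$, so it is block diagonal with respect to the projectors $\Pi_a(\sigma)$. The ladder relation \eqref{eq:general-ladder-relation}, or equivalently the standard fact that $Q^{(r)}_a$ is the $r$-th elementary symmetric function of the $\eps_{a,k}\cZ_{a,k}$, shows that each block is scalar.

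Whichever justification one uses, the outcome is the same: each coefficient $Q^{(r)}_a$ is a polynomial in the commuting involutions $\cZ_{a,k}$. By Proposition~\ref{prop:mode-occupation-polynomial-in-H} it is then a polynomial in $H_a$ alone. Theorem~\ref{thm:mutual-commutativity} gives $[H_1,H_2]=0$, so every $Q^{(r)}_1$ commutes with every $Q^{(s)}_2$. Expanding $T_{1,M}(u)$ and $T_{2,M}(v)$ in powers of $u$ and $v$ then yields \eqref{eq:cross-transfer-commutes} on the non-resonant locus.

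To remove the genericity assumption, note that the commutator $[T_{1,M}(u),T_{2,M}(v)]$ is a polynomial in the couplings $\beta_j,\tbeta_j,\lambda$, with coefficients in the algebra. It vanishes on a Zariski-dense set of couplings, namely the complement of the resonance hypersurfaces, so it vanishes identically. This step needs the non-resonant locus to be non-empty, which is already implicitly used in the Corollary above; the Ising limit with generic $\beta_j$ gives an example. The statement about higher charges then follows by taking Taylor coefficients in $u$ and $v$.

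The main obstacle is the step asserting that $T_{a,M}(u)$ is diagonal in the $\cZ_{a,k}$ basis with exactly those eigenvalues. Commutation with $H_a$ alone only gives block diagonality, and one must rule out off-diagonal pieces within the hidden degeneracy. This degeneracy is absent abstractly only because we work inside the graph-Clifford algebra generated by one family, where the representation may still be reducible.

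My first attempt at this step would be to prove $[T_{a,M}(u),\cZ_{a,k}]=0$ directly from \eqref{eq:Dirac-transfer-formula-new} and the commutativity of the transfer matrices. From there, the inversion relation and the value $T_{a,M}(\pm u_{a,k})$ at the roots of $P_{a,M}$ fix the dependence on each $\cZ_{a,k}$.

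A fallback is purely combinatorial. Using the commutation tables of Lemmas~\ref{lem:internal-sign-rules} and \ref{lem:algebraic-cross-relations} and the square identities \eqref{eq:global-cross-square-identities}, one would show by induction on $m$, via Proposition~\ref{prop:transfer-recursion-new}, that suitable prefix transfer matrices commute up to boundary terms. This is where I expect the real work to lie.
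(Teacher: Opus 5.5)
Your proposal is correct and follows the paper's route. The paper likewise invokes the single-family product formula $T_{a,M}(u)=\prod_{k}(\I-u\eps_{a,k}\cZ_{a,k})$ from \cite{ffd,fermions-behind-the-disguise} and concludes from the cross-commutativity of the $\cZ_{a,k}$ in Corollary~\ref{cor:cross-commutativity-mode-occupations}, which is equivalent to your observation that each $Q^{(r)}_a$ is a polynomial in $H_a$. Two additions of yours are worthwhile. First, the Zariski-density step genuinely improves on the paper, whose proof silently inherits the non-resonance assumption from that Corollary although the theorem is stated without it. Second, your inversion-relation argument already removes your ``main obstacle'' without ladder operators: on each joint eigenvector inside $\Pi_a(\sigma)$ of the commuting Hermitian family $\{T_{a,M}(u)\}$, the eigenvalue polynomial $t$ satisfies $t(u)t(-u)=P_{a,M}(u^2)$, $t(0)=1$ and $-t'(0)=E_a(\sigma)$, and under non-resonance this forces $t(u)=\prod_k(1-u\sigma_k\eps_{a,k})$.
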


  \begin{proof}
The single-family free-fermion diagonalization gives \cite{ffd,fermions-behind-the-disguise}
\[
T_{a,M}(u)
=
\prod_{k=1}^{\cS}
\left(\I-u\eps_{a,k}\cZ_{a,k}\right),
\qquad
\cZ_{a,k}=[\Psi_{a,k},\Psi_{a,-k}].
\]
The occupation involutions commute within each family. Their cross-commutativity was proved in
Corollary~\ref{cor:cross-commutativity-mode-occupations}.  
Therefore every factor in \(T_{1,M}(u)\) commutes with every factor in
\(T_{2,M}(v)\), proving the theorem.
\end{proof}

\subsection{Compatible edge operators and Dirac fermions}
\label{subsec:compatible-edges}

To have compatible fermionic operators, the edge operators must
also be compatible with the opposite transfer matrix.  Algebraically, we require
\begin{equation}
[\chi_1,\tA_j]=[\chi_1,\tB_j]=0,
\qquad
[\chi_2,A_j]=[\chi_2,B_j]=0
\qquad \text{for all }j.
\label{eq:cross-compatible-edge-algebra}
\end{equation}
Then
\begin{equation}
[\chi_1,T_{2,m}(v)]=0,
\qquad
[\chi_2,T_{1,m}(u)]=0
\end{equation}
for every $0\leq m\leq M$ and all $u,v$.

The two simplicial cliques we chose earlier are
\begin{equation}
  K_1=\{A_1,B_1\},\qquad K_2=\{\tA_{M-1},\tB_M\}.
\end{equation}
Recall that
\begin{equation}
 A_j=\ii\lambda_j B_jB_{j+1}\tB_j,
\qquad
\tA_j=-\ii\tlambda_j\tB_j\tB_{j+1}B_{j+1},
\qquad (j=1,\dots,M-1).
\end{equation}
This means that if
\begin{equation}
  \begin{split}
    \{\chi_1,B_1\}&=0,\qquad [\chi_1,B_j]=0\quad j>1,\\
    \{\chi_2,\tB_M\}&=0,\qquad [\chi_2,\tB_j]=0\quad j<M,\\
  \end{split}
\end{equation}
and if the $A_j$ and $\tA_j$ are given by the cubic products above while the cross-commutation relations
\eqref{eq:cross-compatible-edge-algebra} are also satisfied, 
then the edge operators are compatible with the simplicial cliques chosen above. 

There are two useful choices for the mutual relation of the two edge operators:
\begin{equation}
[\chi_1,\chi_2]=0
\label{eq:commuting-edge-algebra}
\end{equation}
or
\begin{equation}
\{\chi_1,\chi_2\}=0.
\label{eq:anti-commuting-edge-algebra}
\end{equation}
In these cases we obtain simple cross-relations between the fermionic operators. Indeed, 
we can show using \eqref{eq:Dirac-transfer-formula-new},
Theorem~\ref{thm:cross-transfer-commutativity-new}, and the edge-transfer
relations above, that the commutation relations of the two hidden Dirac families are
given by
\begin{equation}
\Psi_{1,k}\Psi_{2,\ell}
=
\eta\,\Psi_{2,\ell}\Psi_{1,k},
\qquad
\eta=\begin{cases}
+1,&[\chi_1,\chi_2]=0,\\
-1,&\{\chi_1,\chi_2\}=0.
\end{cases}
\label{eq:mutual-Dirac-statistics}
\end{equation}

We now consider the Ising and XY representations and find concrete edge operators satisfying all the necessary
conditions. For simplicity we focus on the case of \eqref{eq:commuting-edge-algebra}, which leads to commuting families
of fermionic ladder operators. In this case the edge operators can be chosen as completely localized operators in both
representations. It is also possible to find appropriate $\chi_{1,2}$ in the anti-commuting case, but then one of them
needs to be a non-local operator product.

In the Ising case we denote by $n=\lceil (M+1)/2 \rceil$ the number of spins on one chain. For the edge operators we
can choose for example 
\begin{equation}
  \chi_1=\sigma_1^x,\qquad
  \chi_2=
	  \begin{cases}
	    \tau^x_n \quad \text{$M$ odd}\\
	   \tau^z_n \quad \text{$M$ even}\\
	  \end{cases}.
\end{equation}

In the XY case we can choose
\begin{equation}
  \chi_1=X_1,\qquad \chi_2=X_{M+1}.
\end{equation}
These choices satisfy all requirements.

Finally, we remark that if we had chosen $K_1$ as the simplicial clique for both $H_1$ and $H_2$, then we would not be
able to satisfy all the necessary conditions for the edge operators, irrespective of the representation. 
A putative edge operator \(\chi_2\) would need to obey
\[
\{\chi_2,\widetilde B_1\}=0,
\qquad
\{\chi_2,\widetilde A_1\}=0,
\]
and cross-compatibility with \(H_1\) would require
\[
[\chi_2,B_j]=[\chi_2,A_j]=0
\qquad
\text{for all }j.
\]
However,
\[
A_1=\mathrm i\lambda_1 B_1B_2\widetilde B_1.
\]
Thus \([\chi_2,B_1]=[\chi_2,B_2]=0\), together with
\(\{\chi_2,\widetilde B_1\}=0\), forces
\[
\{\chi_2,A_1\}=0,
\]
contradicting the required condition \([\chi_2,A_1]=0\).  Therefore the
same-left-boundary choice cannot give cross-compatible edge operators.

\section{The degeneracies in the spectrum}

\label{sec:degs}

We are now in a position to discuss our main claims about the spectrum and its degeneracies.
It follows from the previous sections that the energy eigenvalues are of the form \eqref{eq:energies}.
We now discuss the degeneracies of these energy values, both in the Ising and XY representations.

Assume that the coupling constants $\beta_j$, $\tilde\beta_j$, $j=1,\dots,M$ are generic numbers, and in
particular, the two sets are not related to each other in any way.
Fix a finite $M$ and tune the perturbation
strength $\lambda$ continuously from 0 to some finite value.
Switching on an infinitesimal $\lambda$ can split existing
degeneracies, but it cannot increase them, except at isolated points in parameter space. 
At the same time, the energy formula above is rigid; in particular, the number of fermionic eigenmodes $\cS$ is the same
for $\lambda=0$ and for a non-zero $\lambda$. Only the
one-particle energies change as a function of $\lambda$. It follows that the degeneracies are exactly the same for
$\lambda=0$ and a finite non-zero $\lambda$, at least in a small neighbourhood of $\lambda=0$.

The actual degeneracies can be computed using simple counting arguments:
\begin{itemize}
\item For $M=2\ell+1$ the number of qubits used in the two representations is the same, given by $N=2\ell+2$. The number of
  fermionic eigenmodes is $\cS=\ell+1$ for each Hamiltonian $H_1$ and $H_2$; therefore the total number of eigenmodes
  coincides with the number of qubits. This implies that every energy level is non-degenerate.
\item For $M=2\ell$ the number of qubits used in the two representations is not the same:
    we have $N=2\ell+2$ and $N=2\ell+1$ qubits in the Ising and XY representations, respectively. The total number of
    eigenmodes is $2\cS=2\ell$. This implies that every energy level is degenerate with a homogeneous degeneracy of 4
    and 2 in the Ising and XY representations, respectively.
\end{itemize}

In Appendix \ref{app:numerics} we present concrete numerical evidence for the free-fermionic spectrum and these
degeneracies, in the case of the XY representation.

If we impose more symmetry on the combinations $H_1\pm H_2$, then the degeneracy pattern can change substantially. For
example, in the homogeneous case $\beta_j=\tilde \beta_j=\beta$, the two Hamiltonians have identical single-particle energies.
As a result, the
degeneracies of the energy eigenvalues will not be homogeneous. In particular, the null spaces of $H_1\pm H_2$ will be
exponentially degenerate.

\section{The homogeneous chain}
\label{sec:homogeneous-dispersion}

Here we treat the homogeneous chain and derive the dispersion relations and the standing-wave equations for the
quasi-momenta of the fermionic excitations. We show that these quantities also interpolate between the corresponding
quantities of the XY and FFD models.

We use the normalization
\begin{equation}
\beta_j=\tbeta_j=1
\qquad (j=1,\dots,M),
\qquad
\lambda_j=\tlambda_j=\lambda
\qquad (j=1,\dots,M-1).
\label{eq:homogeneous-beta-one-normalization}
\end{equation}
Thus, for both families,
\begin{equation}
B_j^2=\tB_j^2=\I,
\qquad
A_j^2=\tA_j^2=\lambda^2\I.
\label{eq:homogeneous-beta-one-squares}
\end{equation}
We assume $\lambda>0$ in the dispersion formulas.  The scalar spectral data
only depend on $\lambda^2$, so changing the sign of $\lambda$ does not change
the one-particle energies.

The two families have the same weighted independence polynomial, and the single-particle energies will be identical.

For either
family let $N_A(S)$ be the number of long generators in the independent set
$S$; that is, the number of $A_j$'s in the untilded family or $\tA_j$'s in the
tilded family.  Then
\begin{equation}
P_M^{(\lambda)}(x)
=
\sum_{S\in\cI_M}(-x)^{|S|}\lambda^{2N_A(S)}.
\label{eq:P-homogeneous-lambda-independence}
\end{equation}
If
\begin{equation}
P_M^{(\lambda)}(x)
=
\prod_{k=1}^{\cS}(1-\eps_k(\lambda)^2x),
\qquad
\cS=\left\lfloor\frac{M+1}{2}\right\rfloor,
\label{eq:P-homogeneous-lambda-factorization}
\end{equation}
then $x_k=\eps_k(\lambda)^{-2}$ are the roots of
$P_M^{(\lambda)}$.  In the present symmetric homogeneous case these are the
one-particle energies for both $H_1$ and $H_2$:
\begin{equation}
\eps_{1,k}=\eps_{2,k}=\eps_k(\lambda).
\end{equation}
Consequently the scalar part of the spectrum of $H_1\pm H_2$ is obtained from
two identical copies of this set of one-particle energies, as in
\eqref{eq:energies}.

\begin{corollary}[Polynomial recursion]
\label{cor:homogeneous-polynomial-recursion}
The homogeneous polynomial in the normalization
\eqref{eq:homogeneous-beta-one-normalization} satisfies
\begin{equation}
P_M^{(\lambda)}(x)
=
P_{M-1}^{(\lambda)}(x)
-xP_{M-2}^{(\lambda)}(x)
-\lambda^2xP_{M-3}^{(\lambda)}(x),
\qquad (M\geq3),
\label{eq:P-homogeneous-lambda-recursion}
\end{equation}
with
\begin{equation}
P_0^{(\lambda)}(x)=1,
\qquad
P_1^{(\lambda)}(x)=1-x,
\qquad
P_2^{(\lambda)}(x)=1-(2+\lambda^2)x.
\label{eq:P-homogeneous-lambda-initial}
\end{equation}
The generating function is
\begin{equation}
\sum_{M\geq0}P_M^{(\lambda)}(x)t^M
=
\frac{1-xt-\lambda^2xt^2}{1-t+xt^2+\lambda^2xt^3}.
\label{eq:P-homogeneous-lambda-generating}
\end{equation}
\end{corollary}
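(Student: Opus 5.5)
Specialize the inhomogeneous recursion for $P_{a,m}$ proved above to the homogeneous normalization \eqref{eq:homogeneous-beta-one-normalization}. There $\beta_j^2=\tbeta_j^2=1$, and by \eqref{eq:pauli-alpha-defs} we get $\alpha_j^2=\talpha_j^2=\lambda^2$. The recursion $P_{a,m}=P_{a,m-1}-x\beta_m^2P_{a,m-2}-x\alpha_{m-1}^2P_{a,m-3}$ then becomes \eqref{eq:P-homogeneous-lambda-recursion} for $m\geq3$. Both families give the same polynomial, which also agrees with \eqref{eq:P-homogeneous-lambda-independence}. The initial data follow the same way: $P_{a,0}=1$ and $P_{a,1}=1-x\beta_1^2=1-x$, while $P_{a,2}=1-x(\beta_1^2+\beta_2^2+\alpha_1^2)=1-(2+\lambda^2)x$. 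These are exactly \eqref{eq:P-homogeneous-lambda-initial}.

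For the generating function, set $F(t)=\sum_{M\ge0}P_M^{(\lambda)}(x)t^M$ and multiply $F$ by $D(t)=1-t+xt^2+\lambda^2xt^3$. For each $M\geq3$, the coefficient of $t^M$ in $D(t)F(t)$ is
\[
P_M^{(\lambda)}-P_{M-1}^{(\lambda)}+xP_{M-2}^{(\lambda)}+\lambda^2xP_{M-3}^{(\lambda)},
\]
which vanishes by the recursion. It therefore remains only to compute the coefficients of $t^0$, $t^1$ and $t^2$.
\begin{itemize}
\item $t^0$: $P_0=1$.
\item $t^1$: $P_1-P_0=-x$.
\item $t^2$: $P_2-P_1+xP_0=1-(2+\lambda^2)x-1+x+x=-\lambda^2x$.
\end{itemize}
Hence $D(t)F(t)=1-xt-\lambda^2xt^2$. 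This is an identity of formal power series, and $D(0)=1$ makes $D$ invertible, so \eqref{eq:P-homogeneous-lambda-generating} follows.

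As a consistency check, one can extend the recursion to $M=2$ with the convention $P_{-1}=1$. Then $P_2=P_1-xP_0-\lambda^2xP_{-1}=1-(2+\lambda^2)x$, which matches, since the interval $A_1$ spans the whole chain. The only step needing care is this low-order bookkeeping: checking that the $t^2$ coefficient correctly absorbs the $\lambda^2$ term coming from $P_2$. Everything else is a direct specialization of results already proved.
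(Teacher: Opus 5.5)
Your proof is correct and follows the paper's route. The recursion comes from the right-boundary decomposition of independent sets: you import it from the already-proved inhomogeneous recursion with $\beta_j^2=1$ and $\alpha_j^2=\lambda^2$, while the paper repeats the decomposition directly. The generating function then follows by summing the recursion, and your explicit check of the $t^0$, $t^1$, $t^2$ coefficients supplies the bookkeeping that the paper leaves implicit.
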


\begin{proof}
Decompose an independent set by the largest right endpoint.  It either uses no
generator touching the right boundary, uses the short generator $B_M$ or
$\tB_M$, or uses the long generator $A_{M-1}$ or $\tA_{M-1}$.  The short
weight is $1$ and the long weight is $\lambda^2$, giving the three terms in
\eqref{eq:P-homogeneous-lambda-recursion}.  The initial values are the direct
independent-set sums for algebra sizes $0,1,2$.  The generating function
follows by summing the recursion with these initial values.
\end{proof}

Equivalently,
\begin{equation}
P_M^{(\lambda)}(x)=
\sum_{\substack{r,s\geq0\\2r+3s\leq M+1}}
(-x)^{r+s}\lambda^{2s}
\binom{M+1-r-2s}{r+s}\binom{r+s}{r},
\label{eq:P-homogeneous-lambda-closed-form}
\end{equation}
where $r$ is the number of short generators and $s$ is the number of long
generators.

\subsection{Bulk dispersion and finite-size standing waves}
\label{subsec:bulk-dispersion}

For fixed $x$, the characteristic equation of
\eqref{eq:P-homogeneous-lambda-recursion} is
\begin{equation}
\mu^3-\mu^2+x\mu+\lambda^2x=0.
\label{eq:charpoly-lambda}
\end{equation}
For the physical branch with $x>0$, write the complex conjugate pair and the
real root as
\begin{equation}
\mu_\pm=B_\lambda\e^{\pm\ii p},
\qquad
\mu_0=1-U_\lambda,
\qquad
U_\lambda=2B_\lambda\cos p,
\qquad
0<p<\frac{\pi}{2}.
\label{eq:mu-param-lambda}
\end{equation}
Vieta's relations give
\begin{equation}
U_\lambda^2+\bigl(4\lambda^2\cos^2p-1-\lambda^2\bigr)U_\lambda
-4\lambda^2\cos^2p=0.
\label{eq:U-quadratic-lambda}
\end{equation}
The physical solution is
\begin{equation}
U_\lambda(p)=
\frac{
1+\lambda^2-4\lambda^2\cos^2p+
\sqrt{\bigl(1+\lambda^2-4\lambda^2\cos^2p\bigr)^2
+16\lambda^2\cos^2p}
}{2}.
\label{eq:U-physical-lambda}
\end{equation}
Then
\begin{equation}
B_\lambda(p)=\frac{U_\lambda(p)}{2\cos p},
\qquad
x_\lambda(p)=
\frac{U_\lambda(p)^2\bigl(U_\lambda(p)-1\bigr)}
{4\lambda^2\cos^2p},
\label{eq:x-dispersion-lambda}
\end{equation}
and the one-particle dispersion is
\begin{equation}
\eps_\lambda(p)=\frac{1}{\sqrt{x_\lambda(p)}}
=
\frac{2\lambda\cos p}
{U_\lambda(p)\sqrt{U_\lambda(p)-1}}.
\label{eq:epsilon-dispersion-lambda}
\end{equation}

The exact finite-size formula follows from partial fractions.  If the three
roots of \eqref{eq:charpoly-lambda} are denoted by
$\mu_+$, $\mu_-$, and $\mu_0$, then
\begin{equation}
P_M^{(\lambda)}(x)=
\sum_{a\in\{+,-,0\}}
\frac{\mu_a^{M+3}}{\prod_{b\neq a}(\mu_a-\mu_b)}.
\label{eq:partial-fraction-lambda}
\end{equation}
The decomposition is valid if the roots are all distinct, which is justified for almost all values of $p$.

Define
\begin{equation}
z_\lambda(p):=\mu_+-\mu_0
=B_\lambda(p)\e^{\ii p}-\bigl(1-U_\lambda(p)\bigr)
=\rho_\lambda(p)\e^{\ii\delta_\lambda(p)}.
\label{eq:z-rho-delta-lambda}
\end{equation}
Then
\begin{equation}
\rho_\lambda(p)=
\sqrt{\bigl(3B_\lambda(p)\cos p-1\bigr)^2
+B_\lambda(p)^2\sin^2p},
\qquad
\delta_\lambda(p)=
\arctan\frac{U_\lambda(p)\tan p}{3U_\lambda(p)-2}.
\label{eq:rho-delta-lambda}
\end{equation}
A direct simplification gives
\begin{equation}
P_M^{(\lambda)}(x_\lambda(p))=
\frac{B_\lambda(p)^{M+2}}{\rho_\lambda(p)\sin p}
\sin\bigl((M+3)p-\delta_\lambda(p)\bigr)
+
\frac{\bigl(1-U_\lambda(p)\bigr)^{M+3}}
{\rho_\lambda(p)^2}.
\label{eq:P-exact-phase-lambda}
\end{equation}
Thus the exact root condition is
\begin{equation}
\sin\bigl((M+3)p_k-\delta_\lambda(p_k)\bigr)
=
-\frac{\bigl(1-U_\lambda(p_k)\bigr)^{M+3}\sin p_k}
{\rho_\lambda(p_k)B_\lambda(p_k)^{M+2}}.
\label{eq:exact-quantization-lambda}
\end{equation}
Since $|1-U_\lambda(p)|<B_\lambda(p)$ on the physical branch, the right-hand
side is exponentially small in $M$.  The large-$M$ standing-wave condition is
\begin{equation}
(M+3)p_k-\delta_\lambda(p_k)
=\pi k+O\!\left(
\left|\frac{1-U_\lambda(p_k)}{B_\lambda(p_k)}\right|^{M+2}
\right),
\qquad
k=1,\dots,\cS.
\label{eq:asymptotic-quantization-lambda}
\end{equation}

\subsection{Spectral edge}
\label{subsec:edge}

At fixed finite $\lambda>0$ and $p\to\pi/2$,
\begin{equation}
\eps_\lambda(p)
\sim
\frac{2}{1+\lambda^2}\left(\frac{\pi}{2}-p\right),
\label{eq:linear-edge-lambda}
\end{equation}
so the smallest mode scales as $M^{-1}$.

\subsection{The $\lambda\to0$ limit}
\label{subsec:small-lambda-limit}

The endpoint $\lambda=0$ is obtained as a regular degeneration of the
finite-$\lambda$ formulas.  First, taking the limit in the closed form
\eqref{eq:P-homogeneous-lambda-closed-form} suppresses all terms containing
long generators:
\begin{equation}
P_M^{(0)}(x)
:=
\lim_{\lambda\to0}P_M^{(\lambda)}(x)
=
\sum_{\substack{r\geq0\\2r\leq M+1}}
(-x)^r\binom{M+1-r}{r}.
\label{eq:lambda0-closed-form}
\end{equation}
Equivalently, the same limit in the recursion and the generating function gives
\begin{equation}
P_M^{(0)}(x)=P_{M-1}^{(0)}(x)-xP_{M-2}^{(0)}(x),
\qquad (M\geq2),
\qquad
P_0^{(0)}=1,
\qquad
P_1^{(0)}=1-x,
\label{eq:lambda0-recursion}
\end{equation}
and
\begin{equation}
\sum_{M\geq0}P_M^{(0)}(x)t^M
=
\frac{1-xt}{1-t+xt^2}.
\label{eq:lambda0-generating}
\end{equation}

We now take the limit directly in the finite-$\lambda$ dispersion formulas.
For fixed $0<p<\pi/2$, write $c=\cos p$.  From
\eqref{eq:U-physical-lambda} one obtains
\begin{equation}
U_\lambda(p)
=
1+\lambda^2-4\lambda^4c^2+O(\lambda^6).
\label{eq:lambda0-U-expansion}
\end{equation}
Substitution into \eqref{eq:x-dispersion-lambda} gives
\begin{equation}
x_\lambda(p)
=
\frac{1}{4c^2}
\left[
1+2(1-2c^2)\lambda^2+O(\lambda^4)
\right],
\qquad
\eps_\lambda(p)
=
2c
\left[
1+(2c^2-1)\lambda^2+O(\lambda^4)
\right].
\label{eq:lambda0-dispersion-limit}
\end{equation}
Thus the limiting parametrization is
\begin{equation}
x_0(p)=\frac{1}{4\cos^2p},
\qquad
\eps_0(p)=2\cos p.
\label{eq:lambda0-x-epsilon}
\end{equation}

The finite-size standing-wave condition also has a direct limit.  From
\eqref{eq:rho-delta-lambda},
\begin{equation}
B_\lambda(p)\longrightarrow \frac{1}{2\cos p},
\qquad
\rho_\lambda(p)\longrightarrow \frac{1}{2\cos p},
\qquad
\delta_\lambda(p)\longrightarrow p.
\label{eq:lambda0-phase-limits}
\end{equation}
Moreover,
\begin{equation}
1-U_\lambda(p)=-\lambda^2+O(\lambda^4),
\end{equation}
so the second term in \eqref{eq:P-exact-phase-lambda} vanishes as
$\lambda\to0$.  Taking the limit of \eqref{eq:P-exact-phase-lambda} therefore
gives
\begin{equation}
P_M^{(0)}(x_0(p))
=
\frac{\sin((M+2)p)}{(2\cos p)^{M+1}\sin p}.
\label{eq:lambda0-sine-polynomial}
\end{equation}
The exact root condition follows immediately:
\begin{equation}
\sin((M+2)p_k)=0.
\end{equation}
Hence
\begin{equation}
(M+2)p_k=\pi k,
\qquad
\eps_k(0)=2\cos\frac{\pi k}{M+2},
\qquad
k=1,\dots,\cS.
\label{eq:lambda0-spectrum}
\end{equation}
This is the open-path one-particle quantization obtained as the
$\lambda\to0$ limit of the general standing-wave equation.

The same limiting calculation also gives the first correction to the momenta.
For fixed $M$, the right-hand side of
\eqref{eq:exact-quantization-lambda} is $O(\lambda^{2M+6})$, while
\begin{equation}
\delta_\lambda(p)
=
p-\lambda^2\sin(2p)+O(\lambda^4).
\label{eq:lambda0-delta-expansion}
\end{equation}
Thus
\begin{equation}
(M+2)p_k+\lambda^2\sin(2p_k)
=
\pi k+O(\lambda^4),
\end{equation}
or, with $p_k^{(0)}=\pi k/(M+2)$,
\begin{equation}
p_k(\lambda)
=
p_k^{(0)}
-\frac{\lambda^2}{M+2}\sin(2p_k^{(0)})
+O(\lambda^4).
\label{eq:lambda0-momentum-correction}
\end{equation}

\subsection{The $\lambda\to\infty$ limit}
\label{subsec:large-lambda-limit}

The FFD endpoint is obtained from the finite-$\lambda$ formulas after the
natural rescaling
\begin{equation}
x=\frac{y}{\lambda^2},
\qquad
\widehat\eps=\frac{\eps}{\lambda},
\qquad
y=\frac{1}{\widehat\eps^2}.
\label{eq:large-lambda-scaling}
\end{equation}
Substituting $x=y/\lambda^2$ into the closed form
\eqref{eq:P-homogeneous-lambda-closed-form} gives
\begin{equation}
P_M^{(\lambda)}\!\left(\frac{y}{\lambda^2}\right)
=
\sum_{\substack{r,s\geq0\\2r+3s\leq M+1}}
(-y)^{r+s}\lambda^{-2r}
\binom{M+1-r-2s}{r+s}\binom{r+s}{r}.
\label{eq:large-lambda-scaled-closed-form}
\end{equation}
Therefore only the terms with $r=0$ survive, and we obtain the polynomial of the original FFD model \cite{ffd}
\begin{equation}
R_M(y)
:=
\lim_{\lambda\to\infty}
P_M^{(\lambda)}\!\left(\frac{y}{\lambda^2}\right)
=
\sum_{\substack{s\geq0\\3s\leq M+1}}
(-y)^s\binom{M+1-2s}{s}.
\label{eq:large-lambda-R-closed-form}
\end{equation}
Equivalently, taking the same limit in the recursion and in the generating
function gives
\begin{equation}
R_M(y)=R_{M-1}(y)-yR_{M-3}(y),
\qquad
R_0=1,
\qquad
R_1=1,
\qquad
R_2=1-y,
\label{eq:large-lambda-R-recursion}
\end{equation}
and
\begin{equation}
\sum_{M\geq0}R_M(y)t^M
=
\frac{1-yt^2}{1-t+yt^3}.
\label{eq:large-lambda-R-generating}
\end{equation}
Thus the finite roots of $R_M$ describe precisely the modes for which
$x=O(\lambda^{-2})$, or equivalently $\eps=O(\lambda)$.  The degree of
$R_M$ is $\lfloor(M+1)/3\rfloor$; the remaining roots of
$P_M^{(\lambda)}$ are sent to $y=\infty$ and therefore have
$\widehat\eps\to0$ in this scaling.

We now obtain the limiting dispersion by taking $\lambda\to\infty$ in the
finite-$\lambda$ branch.  For the linearly growing modes, the momentum remains
in the interval $0<p<\pi/3$; we denote it by $\theta$.  Put
\[
c=\cos\theta,
\qquad
q=4c^2-1>0.
\]
The large-$\lambda$ expansion of \eqref{eq:U-physical-lambda} is
\begin{equation}
U_\lambda(\theta)
=
\frac{4c^2}{q}
-\frac{4c^2}{q^3}\lambda^{-2}
+O(\lambda^{-4}).
\label{eq:large-lambda-U-expansion}
\end{equation}
Consequently
\begin{equation}
U_\infty(\theta)=\frac{4\cos^2\theta}{4\cos^2\theta-1},
\qquad
B_\infty(\theta)
=
\frac{2\cos\theta}{4\cos^2\theta-1}
=
\frac{\sin2\theta}{\sin3\theta}.
\label{eq:large-lambda-U-B}
\end{equation}
Taking the same limit in \eqref{eq:x-dispersion-lambda} gives
\begin{equation}
\lambda^2 x_\lambda(\theta)
=
\frac{4c^2}{q^3}
\left[
1-\frac{2(1+2c^2)}{q^2}\lambda^{-2}
+O(\lambda^{-4})
\right].
\label{eq:large-lambda-x-expansion}
\end{equation}
Hence
\begin{equation}
y_\infty(\theta)
=
\frac{4\cos^2\theta}{(4\cos^2\theta-1)^3},
\qquad
\widehat\eps_\infty(\theta)
=
\frac{1}{\sqrt{y_\infty(\theta)}}
=
\frac{(4\cos^2\theta-1)^{3/2}}{2\cos\theta},
\qquad
0<\theta<\frac{\pi}{3}.
\label{eq:large-lambda-dispersion}
\end{equation}
Equivalently,
\begin{equation}
\frac{\eps_\lambda(\theta)}{\lambda}
\longrightarrow
\frac{(4\cos^2\theta-1)^{3/2}}{2\cos\theta}.
\label{eq:large-lambda-scaled-epsilon-limit}
\end{equation}
This is Fendley's $q=0$ FFD dispersion after identifying his momentum as
$p_{\rm Fen}=3\theta$ \cite{ffd}.  The first correction follows from
\eqref{eq:large-lambda-x-expansion}:
\begin{equation}
\frac{\eps_\lambda(\theta)}{\lambda}
=
\frac{(4\cos^2\theta-1)^{3/2}}{2\cos\theta}
\left[
1+
\frac{1+2\cos^2\theta}{(4\cos^2\theta-1)^2}\lambda^{-2}
+O(\lambda^{-4})
\right].
\label{eq:large-lambda-epsilon-correction}
\end{equation}

The limit is non-uniform at $\theta=\pi/3$.  For fixed
$p\in(\pi/3,\pi/2)$, put $d=1-4\cos^2p>0$.  The same finite-$\lambda$
formula gives
\begin{equation}
U_\lambda(p)
=
\lambda^2 d+d^{-1}+O(\lambda^{-2}),
\end{equation}
and therefore
\begin{equation}
\eps_\lambda(p)
=
\frac{2\cos p}{\lambda^2(1-4\cos^2p)^{3/2}}
\left(1+O(\lambda^{-2})\right).
\label{eq:large-lambda-collapsing-branch}
\end{equation}
This part of the physical branch collapses to zero energy in the scaled
Hamiltonian $H/\lambda$.  At the boundary $p=\pi/3$ one instead finds
\begin{equation}
\eps_\lambda(\pi/3)\sim \lambda^{-1/2}.
\end{equation}

It remains to take the same limit in the finite-size standing-wave formula.
From \eqref{eq:rho-delta-lambda} we obtain
\begin{equation}
\rho_\infty(\theta)
=
\frac{\sqrt{8\cos^2\theta+1}}{4\cos^2\theta-1},
\qquad
\delta_\infty(\theta)
=
\arctan
\frac{2\cos\theta\sin\theta}{2\cos^2\theta+1}.
\label{eq:large-lambda-rho-delta}
\end{equation}
Taking $\lambda\to\infty$ in \eqref{eq:P-exact-phase-lambda} gives the exact
limiting phase formula
\begin{equation}
R_M(y_\infty(\theta))
=
\frac{B_\infty(\theta)^{M+2}}{\rho_\infty(\theta)\sin\theta}
\sin\bigl((M+3)\theta-\delta_\infty(\theta)\bigr)
+
\frac{\bigl(1-U_\infty(\theta)\bigr)^{M+3}}
{\rho_\infty(\theta)^2}.
\label{eq:large-lambda-R-phase}
\end{equation}
Therefore the exact limiting root condition is
\begin{equation}
\sin\bigl((M+3)\theta_\ell-\delta_\infty(\theta_\ell)\bigr)
=
-\frac{
\bigl(1-U_\infty(\theta_\ell)\bigr)^{M+3}\sin\theta_\ell
}{
\rho_\infty(\theta_\ell)B_\infty(\theta_\ell)^{M+2}
},
\qquad
\ell=1,\dots,\left\lfloor\frac{M+1}{3}\right\rfloor.
\label{eq:large-lambda-exact-quantization}
\end{equation}
Since
\begin{equation}
\left|
\frac{1-U_\infty(\theta)}{B_\infty(\theta)}
\right|
=
\frac{1}{2\cos\theta}
<1
\qquad
(0<\theta<\pi/3),
\end{equation}
the large-$M$ standing-wave condition is
\begin{equation}
(M+3)\theta_\ell-\delta_\infty(\theta_\ell)
=
\pi\ell
+
O\!\left((2\cos\theta_\ell)^{-(M+2)}\right),
\qquad
\ell=1,\dots,\left\lfloor\frac{M+1}{3}\right\rfloor.
\label{eq:large-lambda-asymptotic-quantization}
\end{equation}

\section{Conclusions}
\label{sec:concl}

In this work we presented a spin-chain model that falls under the broad umbrella of ``free fermions in disguise'' and
is free from exponential degeneracies. In Subsection \ref{sec:origin} we explained why it is difficult to find such a model.
We now summarize the mechanism that makes these special properties possible.

In our case the final Hamiltonian is a combination of two Hamiltonians $H_1$ and $H_2$. Both have 
$2M-1$ terms, where $M$ is the size of a fundamental graph-Clifford algebra.
Thus, we have a total of $4M-2$ generators. However, they are
not independent, and there are $2M-2$ independent algebraic relations between them.
As a result, the combined generator set can be represented on $M+\ordo(1)$ qubits.
At the same time, the
frustration graphs of the two Hamiltonians $H_1$ and $H_2$ allow for $\lfloor (M+1)/2\rfloor$ fermionic eigenmodes, leading
to a total of $2\lfloor (M+1)/2\rfloor$ eigenmodes. Thus, we have enough eigenmodes to avoid exponential
degeneracies. 

However, there is an additional crucial point: even though $H_1$ and $H_2$ have ECF-type frustration graphs, this is not
true for the combined generator set. Consequently, the model would not even be integrable for arbitrary coupling constants
in the combined generator set. If we choose the couplings such that special quadratic relations are satisfied,
then we obtain the necessary cancellations that ensure commutativity of $H_1$ and $H_2$ and the compatibility of their
free-fermionic structures. Such relations were used in the work \cite{sajat-FP-model} for a single Hamiltonian. In our case we can
find the relations for the combined generator set of two Hamiltonians $H_1$ and $H_2$, thereby avoiding the
exponential degeneracies.

Our Hamiltonian $H_1-H_2$ in the XY representation is structurally appealing: it interpolates between the
$U(1)$-invariant XY model (given by the Dzyaloshinskii–Moriya interaction term) and the original FFD model of
Fendley. The degeneracies are constant along the line connecting 
the two models, except for the FFD endpoint, which corresponds to $\lambda=\infty$ in our conventions. For numerical
data about the interpolation, see Appendix \ref{app:degchange}.

Given that our model is a perturbation of a Jordan-Wigner solvable model, and that the free-fermionic
operators change continuously with $\lambda$, a natural question is whether there is any sort of ``simple'' connection
between the models. Is there perhaps a finite-depth quantum circuit or an MPO with finite bond dimension that could
connect the structures of the FFD-type perturbation and the XY chain? If such a circuit or MPO exists, it would suggest
that even the original FFD model can be obtained from the XY chain by such ``simple'' means. Currently it is not known
whether such a connection can or cannot exist. 

We also note an interesting observation: If we apply the standard techniques of integrable models to the transfer
matrices, then we can define appropriate Lax operators and also $R$-matrices that perform intertwining
steps. Interestingly, we find that the $R$-matrix of the model is exactly the same $9\times 9$ matrix for all
(inhomogeneous) coupling 
constants. This includes the FFD model itself as an endpoint of the interpolation. A simplification arises only in
the XY endpoint, where only a $4\times 4$ block of the $R$-matrix remains relevant. These results are not used in our
main derivations, but they are presented as complementary material in Appendix \ref{app:MPO}. Results about the
integrability structures may be relevant to the study of the models with periodic boundary conditions.

In future work we plan to investigate other mechanisms that lead to non-degenerate FFD-type models. As we were 
preparing this manuscript, we discovered another such model with a related but slightly different mechanism. We
leave this topic to further research.

\section*{Acknowledgements}

We thank Kohei Fukai and Istv\'an Vona for many useful discussions and collaborations on related
topics. We also thank Istv\'an Vona for an independent AI-based check of the results of this work.

The author was supported by the Hungarian National Research,
Development and Innovation Office, NKFIH Grant No. K-145904.

\appendix

\section{Numerical data}
\label{app:numerics}

\subsection{Free fermions for $H_1-H_2$}

We now give a direct numerical check of the spectral formulas in
the XY representation.  We restrict to the combination $H_1-H_2$ and choose
\[
        \lambda=1,\qquad
        \beta_k=\sin k,\qquad
        \widetilde\beta_k=\cos k ,
        \qquad k=1,\ldots,M ,
\]
where the argument of the trigonometric functions is measured in radians. For our purposes these couplings serve
as completely generic numbers.

For these parameters we diagonalize the Pauli-chain Hamiltonian
\[
\begin{aligned}
H_1-H_2
={}&
\sum_{j=1}^{M}
\left(
  \beta_j Y_jX_{j+1}
  -\widetilde\beta_j X_jY_{j+1}
\right)
\\
&+
\sum_{j=1}^{M-1}
\left(
  \alpha_j Z_jX_{j+1}X_{j+2}
  -\widetilde\alpha_j X_jX_{j+1}Z_{j+2}
\right),
\end{aligned}
\]
with
\[
        \alpha_j=\beta_j\beta_{j+1}\widetilde\beta_j,
        \qquad
        \widetilde\alpha_j
        =
        \widetilde\beta_j\widetilde\beta_{j+1}\beta_{j+1}.
\]
The chain length in the XY representation is
\[
        L=M+1 ,
\]
so the exact diagonalization is performed on a Hilbert space of dimension
$2^L$.

The energies obtained from the free-fermionic solution are compared with the
sign-sum expression \eqref{eq:energies}.

The common multiplicity in the XY representation is
\[
        2^{L-2\cS}.
\]

The one-particle energies are extracted from the spectral polynomials
\[
        P_{a,M}(x)=\prod_{r=1}^{\cS}
        \left(1-\varepsilon_{a,r}^{\,2}x\right),
        \qquad a=1,2.
\]
For the two tested system sizes the resulting polynomials are given in Table \ref{tab:xy-polynomials}.

\begin{table}[t]
\centering
\caption{Spectral polynomials for $M=4,5$.}
\label{tab:xy-polynomials}
\begin{tabular}{c c l}
\toprule
$M$ & Polynomial & $P_{a,M}(x)$ \\
\midrule
$4$ & $P_{1,4}$ &
$1-2.312499318432\,x+0.999015949013\,x^2$ \\
$4$ & $P_{2,4}$ &
$1-2.157454394006\,x+0.572702017908\,x^2$ \\
$5$ & $P_{1,5}$ &
$1-3.457052321338\,x+2.972333349787\,x^2
 -0.012966539921\,x^3$ \\
$5$ & $P_{2,5}$ &
$1-2.269530745604\,x+0.708647863841\,x^2
 -0.023021856212\,x^3$ \\
\bottomrule
\end{tabular}
\end{table}

The corresponding positive one-particle energies are listed in
Table~\ref{tab:xy-one-particle}.  These are the numbers entering the
sign-sum expression above.

\begin{table}[t]
\centering
\caption{Positive one-particle energies.}
\label{tab:xy-one-particle}
\begin{tabular}{c c c c c}
\toprule
$M$ & Sector & $\varepsilon_{a,1}$ & $\varepsilon_{a,2}$
& $\varepsilon_{a,3}$ \\
\midrule
$4$ & $a=1$ & $0.758261362983$ & $1.318157435149$ & -- \\
$4$ & $a=2$ & $0.556771270364$ & $1.359213061482$ & -- \\
$5$ & $a=1$ & $0.066217399570$ & $1.254816328025$
& $1.370439185172$ \\
$5$ & $a=2$ & $0.191676889058$ & $0.573733517849$
& $1.379717567585$ \\
\bottomrule
\end{tabular}
\end{table}

We next list the resulting exact-diagonalization spectrum of $H_1-H_2$.
Only positive energies are shown; the spectrum is symmetric under
$E\mapsto -E$.  For $M=4$, every listed positive energy and its negative
partner have multiplicity $2$, in agreement with $2^{L-2\cS}=2$.  For
$M=5$, all levels are simple, in agreement with $2^{L-2\cS}=1$.

\begin{table}[t]
\centering
\caption{Positive exact-diagonalization energies of $H_1-H_2$ for $M=4$.
Each listed energy and its negative partner have multiplicity $2$.}
\label{tab:xy-spectrum-M4}
\begin{tabular}{c c c c}
\toprule
\multicolumn{4}{c}{$M=4$} \\
\midrule
$0.160434466286$ & $0.242545718951$
& $1.273977007015$ & $1.356088259680$ \\
$1.362337863284$ & $2.475880404013$
& $2.878860589250$ & $3.992403129979$ \\
\bottomrule
\end{tabular}
\end{table}

\begin{table}[t]
\centering
\caption{Positive exact-diagonalization energies of $H_1-H_2$ for $M=5$.
All listed energies and their negative partners are simple.}
\label{tab:xy-spectrum-M5}
\begin{tabular}{c c c c}
\toprule
\multicolumn{4}{c}{$M=5$} \\
\midrule
$0.413910139136$ & $0.432466903961$
& $0.546344938275$ & $0.564901703101$ \\
$0.663712618255$ & $0.796147417394$
& $0.797263917252$ & $0.815820682077$ \\
$0.929698716391$ & $0.948255481217$
& $1.047066396370$ & $1.179501195510$ \\
$1.561377174834$ & $1.579933939660$
& $1.693811973973$ & $1.712368738799$ \\
$1.811179653953$ & $1.943614453092$
& $1.944730952950$ & $1.963287717775$ \\
$2.077165752089$ & $2.095722516915$
& $2.194533432068$ & $2.326968231208$ \\
$3.173345274305$ & $3.305780073444$
& $3.556699052421$ & $3.689133851560$ \\
$4.320812310003$ & $4.453247109142$
& $4.704166088119$ & $4.836600887258$ \\
\bottomrule
\end{tabular}
\end{table}

The agreement between exact diagonalization and the sign-sum construction
is at the level of numerical roundoff in both examples.
No accidental
degeneracies are observed in these two tests.  The only degeneracy for
$M=4$ is the uniform twofold multiplicity in the XY representation
$2^{L-2\cS}=2$, while for $M=5$ the full spectrum is non-degenerate.

\subsection{The change in the degeneracy pattern}


\label{app:degchange}

Here we display the change in the degeneracy pattern as we move from the XY point to the FFD point. To do this, we
choose $M=4$, random $\beta_k$ and $\tilde\beta_k$ couplings, and vary $\lambda$ from zero to infinity. To keep the
spectra finite, we renormalize the Hamiltonian and present data for the spectrum of
\begin{equation}
  H_\theta\equiv \cos(\theta)H_{0}+\sin(\theta) H_{FFD},
\end{equation}
where
\begin{equation}
  \begin{split}
    H_0&=\sum_{j=1}^M B_j-\tilde B_j,\qquad H_{FFD}=\sum_{j=1}^{M-1} A_j-\tilde A_j.
  \end{split}
\end{equation}
We performed exact diagonalization and present spectral data for $\theta\in [0,\pi/2]$. The random couplings are
shown in Table \ref{tab:ra}. For reference, we present concrete numerical values for the energies at $\theta=0$ and
$\theta=\pi/2$ in Table \ref{tab:re}. Finally, the spectrum is plotted in Fig. \ref{fig:Hth}.

\begin{figure}[t]
  \centering
\includegraphics[scale=0.4]{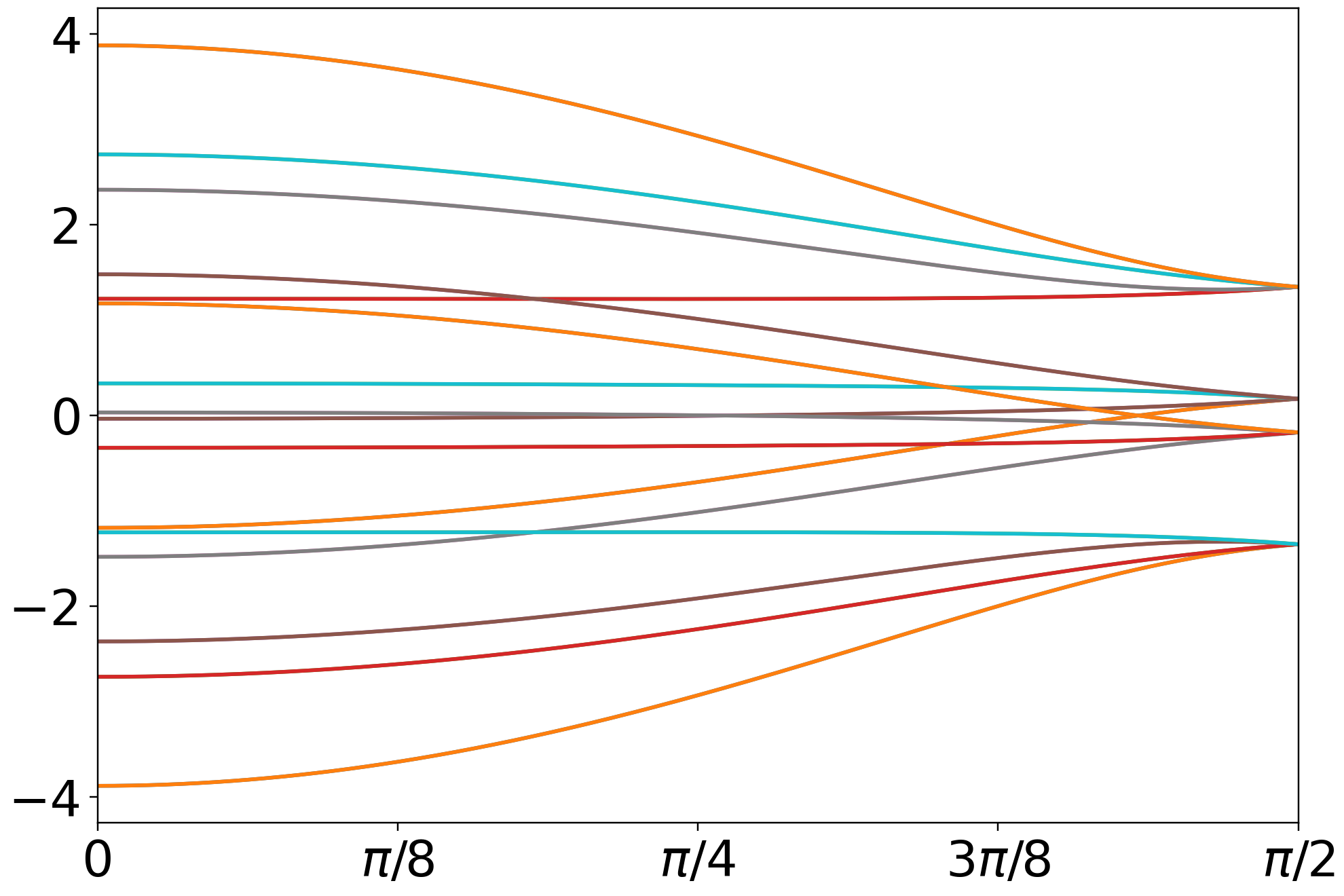}  
  \caption{The spectrum of $H_\theta$ as a function of $\theta$, for $M=4$. For each value of $\theta\ne \pi/2$ there are 16
    different eigenvalues, each with a homogeneous degeneracy of 2, 
    corresponding to the chain having length $L=M+1=5$. They collapse to 4 eigenvalues at $\theta=\pi/2$, corresponding
    to the FFD model, which has $\cS=2$ eigenmodes for $M=4$.}
  \label{fig:Hth}
\end{figure}

\begin{table}[h]
\centering
\caption{Random couplings used for the $M=4$ interpolation.
}
\begin{tabular}{c c c}
\toprule
$k$ & $\beta_k$ & $\widetilde\beta_k$ \\
\midrule
1 & $0.773956048556$ & $0.094177347888$ \\
2 & $0.438878439752$ & $0.975622351637$ \\
3 & $0.858597919911$ & $0.761139701990$ \\
4 & $0.697368029059$ & $0.786064305277$ \\
\bottomrule
\end{tabular}
\label{tab:ra}
\end{table}

\begin{table}[h]
\centering
\caption{Spectra of the Hamiltonians $H_\theta$
  at $\theta=0$ and $\theta=\pi/2$, for $M=4$.
Only positive energies are listed; the spectrum is symmetric under $E\mapsto -E$.
At $\theta=0$, each listed positive level and its negative partner are twofold degenerate.
At $\theta=\pi/2$, each listed positive level and its negative partner are eightfold degenerate.}
\begin{tabular}{c c}
\toprule
Endpoint & Positive energies $E>0$ \\
\midrule
$\theta=0$
&
\begin{tabular}{@{}c@{}}
$0.032944896137$ \\
$0.337604621603$ \\
$1.176669844980$ \\
$1.224894658396$ \\
$1.481329570446$ \\
$2.368619607239$ \\
$2.739169124980$ \\
$3.882894073823$
\end{tabular}
\\
\midrule
$\theta=\pi/2$
&
\begin{tabular}{@{}c@{}}
$0.176626910954$ \\
$1.349446176826$
\end{tabular}
\\
\bottomrule
\end{tabular}
\label{tab:re}
\end{table}

\section{Factorization of the transfer matrix}

\label{app:factor}

Here we prove that simple factorized formulas exist for the transfer matrices $T_{1,2}(u)$, analogous to the one
originally derived
in \cite{ffd}. 

Consider now a single Hamiltonian of the form
\begin{equation}
H=\sum_{j=1}^{M-1} A_j+\sum_{j=1}^{M} B_j,
\label{eq:H-factorization-generic}
\end{equation}
where the generators obey the interval anti-commutation pattern of one chain and
have scalar squares
\begin{equation}
A_j^2=\alpha_j^2\I,
\qquad
B_j^2=\beta_j^2\I.
\end{equation}
For non-zero weights set $\hA_j=A_j/\alpha_j$ and
$\hB_j=B_j/\beta_j$.  Its transfer matrix is
\begin{equation}
T_M(u)=\sum_{S\in\cI_M}(-u)^{|S|}
\prod_{A_j\in S} A_j
\prod_{B_k\in S} B_k.
\label{eq:T-factorization-generic}
\end{equation}
The order is immaterial inside each independent set, because the selected
generators commute.  The same right-boundary decomposition used in the main
text gives
\begin{equation}
T_m(u)=T_{m-1}(u)-uB_mT_{m-2}(u)-uA_{m-1}T_{m-3}(u),
\qquad m\geq2,
\label{eq:T-factorization-recursion}
\end{equation}
with
\begin{equation}
T_{-1}(u)=T_0(u)=\I,
\qquad
T_1(u)=\I-uB_1 .
\label{eq:T-factorization-initial}
\end{equation}
Indeed, an independent set for the prefix with short generators
$B_1,\ldots,B_m$ either contains no generator touching the right boundary,
contains $B_m$, or contains $A_{m-1}$.  In the last two cases the remaining
part lies in the prefixes of lengths $m-2$ and $m-3$, respectively.  For
$m=2$ this gives
\[
T_2(u)=\I-u(B_1+B_2+A_1),
\]
as required.

Define, for any involution $O^2=\I$,
\begin{equation}
g(O,\phi)=\cos\frac{\phi}{2}+O\sin\frac{\phi}{2}.
\label{eq:g-factorization}
\end{equation}
Choose the angles recursively by
\begin{align}
\phi_{-1}^A&=\phi_0^A=\phi_0^B=0,
&\sin\phi_1^B&=-u\,\beta_1,
\label{eq:factor-angles-init}
\\
\sin\phi_j^A&=
-\frac{u\,\alpha_j}{
\cos\phi_{j-2}^A\,
\cos\phi_{j-1}^B\,
\cos\phi_{j-1}^A\,
\cos\phi_j^B},
&1\le j&\le M-1,
\label{eq:factor-angles-A}
\\
\sin\phi_{j+1}^B&=
-\frac{u\,\beta_{j+1}}{
\cos\phi_j^A\,
\cos\phi_{j-1}^A\,
\cos\phi_j^B},
&1\le j&\le M-1.
\label{eq:factor-angles-B}
\end{align}
The angles are well-defined at least in a neighbourhood of $u=0$ and can then be continued analytically.

With
\begin{equation}
g_j^A=\cos\frac{\phi_j^A}{2}+\hA_j\sin\frac{\phi_j^A}{2},
\qquad
g_j^B=\cos\frac{\phi_j^B}{2}+\hB_j\sin\frac{\phi_j^B}{2},
\label{eq:gates-factorization}
\end{equation}
\begin{equation}
G_M=g_1^B g_1^A g_2^B g_2^A \cdots g_{M-1}^B g_{M-1}^A g_M^B,
\label{eq:GM-factorization}
\end{equation}
and
\begin{equation}
G_M^{\mathrm{rev}}=
 g_M^B g_{M-1}^A g_{M-1}^B \cdots g_1^A g_1^B,
\label{eq:GMrev-factorization}
\end{equation}
one obtains the exact factorization
\begin{equation}
T_M(u)=G_M(u)\,G_M^{\mathrm{rev}}(u).
\label{eq:factorization}
\end{equation}
The proof is an algebraic induction based on
\eqref{eq:T-factorization-recursion} and the elementary identities
\begin{equation}
g(O,\phi)^2=\I+O\sin\phi,
\qquad
g(O,\phi)Xg(O,\phi)=\begin{cases}
(\cos\phi)X,& \{O,X\}=0,\\[1mm]
X(\I+O\sin\phi),& [O,X]=0.
\end{cases}
\label{eq:g-basic-identities}
\end{equation}
They follow by writing $g(O,\phi)=c+sO$, with
$c=\cos(\phi/2)$ and $s=\sin(\phi/2)$.  Then
$(c+sO)^2=\I+2csO=\I+O\sin\phi$.  If $XO=-OX$, then
$(c+sO)X=X(c-sO)$, giving $g(O,\phi)Xg(O,\phi)=X(c^2-s^2)$.  If $XO=OX$,
then $X$ can be moved through both factors and the first identity applies.

For the induction, set
\begin{equation}
F_m(u)=G_m(u)G_m^{\mathrm{rev}}(u),
\qquad
F_{-1}(u)=F_0(u)=\I .
\label{eq:F-factorization-def}
\end{equation}
For $m=1$,
\[
F_1(u)=(g_1^B)^2
=\I+\hB_1\sin\phi_1^B
=\I-u\beta_1\hB_1
=\I-uB_1
=T_1(u).
\]
For $m\geq2$ one has
\[
G_m=G_{m-1}g_{m-1}^Ag_m^B,
\qquad
G_m^{\mathrm{rev}}=g_m^Bg_{m-1}^AG_{m-1}^{\mathrm{rev}}.
\]
Using \eqref{eq:g-basic-identities},
\begin{align}
F_m
&=G_{m-1}g_{m-1}^A(g_m^B)^2g_{m-1}^A
  G_{m-1}^{\mathrm{rev}}\nonumber\\
&=F_{m-1}
+\sin\phi_{m-1}^A\,G_{m-1}\hA_{m-1}G_{m-1}^{\mathrm{rev}}
\nonumber\\
&\hspace{2em}
+\sin\phi_m^B\cos\phi_{m-1}^A\,
G_{m-1}\hB_mG_{m-1}^{\mathrm{rev}} .
\label{eq:factor-app-1}
\end{align}
The operator $\hA_{m-1}$ anti-commutes exactly with
$g_{m-3}^A$, $g_{m-2}^B$, $g_{m-2}^A$, and $g_{m-1}^B$ inside $G_{m-1}$, with
the convention that gates with non-positive indices are absent.  It commutes
with all other gates.  Therefore
\begin{equation}
G_{m-1}\hA_{m-1}G_{m-1}^{\mathrm{rev}}
=
\cos\phi_{m-3}^A\cos\phi_{m-2}^B
\cos\phi_{m-2}^A\cos\phi_{m-1}^B\,
\hA_{m-1}F_{m-3}.
\label{eq:factor-app-2}
\end{equation}
Similarly, $\hB_m$ anti-commutes exactly with
$g_{m-2}^A$ and $g_{m-1}^B$ inside $G_{m-1}$, and hence
\begin{equation}
G_{m-1}\hB_mG_{m-1}^{\mathrm{rev}}
=\cos\phi_{m-2}^A\cos\phi_{m-1}^B\,\hB_mF_{m-2}.
\label{eq:factor-app-3}
\end{equation}
Substituting \eqref{eq:factor-app-2} and \eqref{eq:factor-app-3} into
\eqref{eq:factor-app-1}, and then using
\eqref{eq:factor-angles-A} and \eqref{eq:factor-angles-B}, gives
\begin{equation}
F_m(u)=F_{m-1}(u)-uA_{m-1}F_{m-3}(u)-uB_mF_{m-2}(u).
\label{eq:F-factorization-recursion}
\end{equation}
This is the recursion \eqref{eq:T-factorization-recursion}, with the same
initial data \eqref{eq:T-factorization-initial}.  Hence
$F_m(u)=T_m(u)$ for every $m$, and in particular
\eqref{eq:factorization} holds for $m=M$.

\section{MPO representations in the XY representation}

\label{app:MPO}

In this appendix we give a self-contained MPO construction for either the $H_1$ or the $H_2$ family
in the XY representation. These MPOs are not used in any of the proofs in the main text; instead we present them to
complement the material. We note that for the fermionic operators we find an MPO of bond dimension 4, and this is
a new result even for the FFD model of Fendley. The results below are valid for every spectral parameter $u$, and not
only the ``on-shell'' ones.

We use \(L\) for the length of the Pauli chain and
\(M=L-1\) for the number of \(B\)-generators.  For simplicity we focus only on $H_1$ and write
\begin{equation}
H=\sum_{j=1}^{M}B_j+\sum_{j=1}^{M-1}A_j,
\qquad
B_j=\beta_jY_jX_{j+1},
\qquad
A_j=\alpha_jZ_jX_{j+1}X_{j+2}.
\label{eq:mpo-single-chain}
\end{equation}
Here \(j=1,\dots,M\) for \(B_j\) and \(j=1,\dots,M-1\) for \(A_j\).
We use the boundary convention
\begin{equation}
\beta_j=0\quad (j\notin\{1,\dots,M\}),
\qquad
\alpha_j=0\quad (j\notin\{1,\dots,M-1\}).
\label{eq:mpo-boundary-convention}
\end{equation}
In particular \(\beta_L=0\), \(\alpha_M=0\), and \(\alpha_L=0\).

For \(0\leq m\leq M\), let \(\mathcal I_m\) be the independent sets of the
interval family
\begin{equation}
B_j\leftrightarrow [j,j+1],
\qquad
A_j\leftrightarrow [j,j+2],
\qquad
1\leq j\leq m,
\qquad
1\leq j\leq m-1
\label{eq:mpo-intervals}
\end{equation}
on the prefix with \(m\) short generators.  Thus the corresponding physical
prefix has \(m+1\) sites.  For \(S\in\mathcal I_m\), let \(g(S)\) be the
ordered product of the selected generators, ordered by increasing left
endpoint.  The order is immaterial inside an independent set, because the
selected generators commute.  We define
\begin{equation}
T_m(u)=\sum_{S\in\mathcal I_m}(-u)^{|S|}g(S),
\qquad
0\leq m\leq M.
\label{eq:mpo-transfer-def}
\end{equation}
The full transfer matrix of the chain is \(T_M(u)\), and
\begin{equation}
-\partial_uT_M(0)=H.
\end{equation}

\subsection{Bond-dimension-\(3\) MPO for the transfer matrix}

Introduce a three-dimensional auxiliary space with basis
\(|0\rangle,|1\rangle,|2\rangle\), and let \(E_{rs}\) be its matrix units.
The local Lax operator is
\begin{equation}
\mathcal L_j(u)=
E_{00}\otimes\I_j
-u\beta_jE_{01}\otimes Y_j
-u\alpha_jE_{02}\otimes Z_j
+E_{10}\otimes X_j
+E_{21}\otimes X_j .
\label{eq:mpo-lax-def}
\end{equation}
Equivalently,
\begin{equation}
\mathcal L_j(u)=
\begin{pmatrix}
\I_j & -u\beta_jY_j & -u\alpha_jZ_j\\
X_j  & 0             & 0\\
0    & X_j           & 0
\end{pmatrix}.
\label{eq:mpo-lax-matrix}
\end{equation}
Then
\begin{equation}
T_M(u)=
\langle 0|\mathcal L_1(u)\mathcal L_2(u)\cdots \mathcal L_L(u)|0\rangle .
\label{eq:mpo-transfer-main}
\end{equation}
The auxiliary interpretation is that state \(0\) means that no \(X\)-operator
is pending, state \(1\) records one pending \(X\), and state \(2\) records two
pending \(X\)'s.  The transition
\[
0\xrightarrow{-u\beta_jY_j}1\xrightarrow{X_{j+1}}0
\]
produces the tile \(-uB_j\), while
\[
0\xrightarrow{-u\alpha_jZ_j}2\xrightarrow{X_{j+1}}1
  \xrightarrow{X_{j+2}}0
\]
produces the tile \(-uA_j\).  Thus the MPO is an automaton for non-overlapping
tiles of lengths \(2\) and \(3\).

Setting $\beta_j=0$ in \eqref{eq:mpo-lax-matrix} we obtain the Lax operators that were found by Fendley in \cite{ffd}. 

\begin{proposition}
The MPO formula \eqref{eq:mpo-transfer-main} reproduces the independent-set
transfer matrix \eqref{eq:mpo-transfer-def}.
\end{proposition}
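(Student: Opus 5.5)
The plan is to expand the matrix product $\langle 0|\mathcal L_1\cdots\mathcal L_L|0\rangle$ as a sum over paths in the auxiliary space and identify those paths with independent sets. Each path is a word $s_0=0,s_1,\dots,s_L=0$ with $s_j\in\{0,1,2\}$, and the entry $(\mathcal L_j)_{s_{j-1}s_j}$ is inserted at site $j$. The only non-zero entries are $0\to0$ with weight $\I_j$, $0\to1$ with $-u\beta_jY_j$, $0\to2$ with $-u\alpha_jZ_j$, $1\to0$ with $X_j$, and $2\to1$ with $X_j$. Hence every non-zero path decomposes uniquely into a concatenation of three kinds of blocks: idle steps $0\to0$; length-two tiles $0\to1\to0$ occupying sites $j,j+1$; and length-three tiles $0\to2\to1\to0$ occupying sites $j,j+1,j+2$. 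The operator content of these tiles is $-u\beta_jY_jX_{j+1}=-uB_j$ and $-u\alpha_jZ_jX_{j+1}X_{j+2}=-uA_j$, respectively.

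The first step is to establish this decomposition. I would argue by induction on the position of the path: from state $0$ the path either idles or opens a tile. Once a tile is opened, the transitions out of states $1$ and $2$ are forced ($1\to0$, and $2\to1$), so the tile closes deterministically. The boundary condition $s_L=0$ guarantees that no tile is left open at the right end.

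The second step is to identify tilings with independent sets. The tiles occupy disjoint site sets, which means their closed intervals $[j,j+1]$ or $[j,j+2]$ share no integer point. By Lemma~\ref{lem:internal-sign-rules} together with \eqref{eq:formal-intervals}, this is exactly the non-overlap condition defining $\mathcal I_M$. Two facts make the endpoint bookkeeping work:
\begin{itemize}
\item intervals that touch at an endpoint correspond to anti-commuting generators, and tiles cannot share a site;
\item $B_j$ with $j\le M$ and $A_j$ with $j\le M-1$ are precisely the tiles fitting inside sites $1,\dots,L$.
\end{itemize}
The boundary convention \eqref{eq:mpo-boundary-convention} removes any tile that would start too late, since $\beta_L=0$, $\alpha_M=\alpha_L=0$, and such tiles would not fit anyway. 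The map from tilings to independent sets is therefore a bijection.

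Finally, the product of the local operators along a path, taken in site order, equals $(-u)^{|S|}g(S)$. Operators on different sites commute, so the site-ordered tensor product is the product of tile operators ordered by left endpoint. Because the tiles occupy disjoint sites, there are no reordering signs. Summing over paths then gives \eqref{eq:mpo-transfer-def} at $m=M$.

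An alternative that I would mention as a cross-check is to define prefix vectors $\langle 0|\mathcal L_1\cdots\mathcal L_m$ and verify that their three components reproduce the recursion of Proposition~\ref{prop:transfer-recursion-new}. The main obstacle, though mild, is the boundary bookkeeping. One part is confirming that the left boundary forces every path to start in state $0$. The other is checking that dangling states at the right end contribute nothing, both because of $|0\rangle$ and because the relevant couplings vanish. I expect the rest to be routine.
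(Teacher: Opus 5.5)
Your proof is correct, but it is organized differently from the paper's. You expand $\langle 0|\mathcal L_1\cdots\mathcal L_L|0\rangle$ as a sum over auxiliary paths and give a direct bijection between nonvanishing paths and independent sets in $\mathcal I_M$, with each path contributing exactly $(-u)^{|S|}g(S)$. This makes rigorous the ``automaton for non-overlapping tiles'' picture, which the paper only states informally before the proposition.

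The paper instead uses what you mention as a cross-check. It introduces the partial contractions $F_n^{(r)}=\langle 0|\mathcal L_1\cdots\mathcal L_n|r\rangle$ and eliminates $F^{(1)}$ and $F^{(2)}$ to get $F_n^{(0)}=F_{n-1}^{(0)}-uB_{n-1}F_{n-2}^{(0)}-uA_{n-2}F_{n-3}^{(0)}$. It then matches this recursion and its initial values against the right-boundary recursion for $T_m(u)$, concluding that $F_n^{(0)}=T_{n-1}$.

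The recursion route has three advantages:
\begin{itemize}
\item it is shorter, because it reuses the independent-set recursion already established;
\item it handles the boundaries through initial conditions;
\item it sets up the partial-contraction machinery that the paper reuses later for the scalar block in the bond-dimension compression.
\end{itemize}
Your bijective route has its own advantages:
\begin{itemize}
\item it needs no recursion or initial-value matching;
\item it makes the combinatorial content explicit, including why the conventions $\beta_L=\alpha_M=\alpha_L=0$ are redundant, since the vector $|0\rangle$ already forbids unfinished tiles;
\item it carries over unchanged to automata with other tile shapes.
\end{itemize}
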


\begin{proof}
For \(0\leq n\leq L\), define the partial contractions
\begin{equation}
F_n^{(r)}(u)=
\langle0|\mathcal L_1(u)\cdots \mathcal L_n(u)|r\rangle,
\qquad r=0,1,2,
\label{eq:mpo-partial-contractions}
\end{equation}
with \(F_0^{(0)}=\I\) and \(F_0^{(1)}=F_0^{(2)}=0\).  From
\eqref{eq:mpo-lax-matrix},
\begin{align}
F_n^{(0)}
&=
F_{n-1}^{(0)}+F_{n-1}^{(1)}X_n,
\label{eq:mpo-F0-step}
\\
F_n^{(1)}
&=
-u\beta_nF_{n-1}^{(0)}Y_n+F_{n-1}^{(2)}X_n,
\label{eq:mpo-F1-step}
\\
F_n^{(2)}
&=
-u\alpha_nF_{n-1}^{(0)}Z_n.
\label{eq:mpo-F2-step}
\end{align}
Eliminating \(F^{(1)}\) and \(F^{(2)}\) gives, for \(n\geq 3\),
\begin{equation}
F_n^{(0)}
=
F_{n-1}^{(0)}
-uB_{n-1}F_{n-2}^{(0)}
-uA_{n-2}F_{n-3}^{(0)} .
\label{eq:mpo-F0-closed-recursion}
\end{equation}
The initial values are
\begin{equation}
F_0^{(0)}=\I,
\qquad
F_1^{(0)}=\I,
\qquad
F_2^{(0)}=\I-uB_1 .
\end{equation}
On the other hand, the independent-set transfer matrices satisfy
\begin{equation}
T_m(u)=T_{m-1}(u)-uB_mT_{m-2}(u)-uA_{m-1}T_{m-3}(u),
\label{eq:mpo-independent-recursion}
\end{equation}
with \(T_{-1}(u)=T_0(u)=\I\) and with \(A_0=0\).  This follows by splitting an
independent set according to the right boundary: it either contains no
generator touching the right endpoint, or it contains \(B_m\), or it contains
\(A_{m-1}\).  Comparing \eqref{eq:mpo-F0-closed-recursion} with
\eqref{eq:mpo-independent-recursion} gives
\[
F_n^{(0)}(u)=T_{n-1}(u).
\]
Setting \(n=L=M+1\) proves \eqref{eq:mpo-transfer-main}.
\end{proof}

\subsection{RTT relation}

The local Lax operator admits a site-independent auxiliary \(R\)-matrix.  Let
\begin{align}
\mathcal R(u,v)
={}&
(u+v)\sum_{i=0}^{2}E_{ii}\otimes E_{ii}
+(v-u)\sum_{0\leq i<j\leq 2}E_{ii}\otimes E_{jj}
+(u-v)\sum_{0\leq i<j\leq 2}E_{jj}\otimes E_{ii}
\nonumber\\
&\hspace{2em}
+2u\sum_{0\leq i<j\leq 2}E_{ij}\otimes E_{ji}
+2v\sum_{0\leq i<j\leq 2}E_{ji}\otimes E_{ij}.
\label{eq:mpo-R-matrix}
\end{align}
Then we find the following remarkable identity:
\begin{equation}
\mathcal R_{12}(u,v)\,
\mathcal L_{1j}(u)\,
\mathcal L_{2j}(v)
=
\mathcal L_{2j}(v)\,
\mathcal L_{1j}(u)\,
\mathcal R_{12}(u,v).
\label{eq:mpo-local-RTT}
\end{equation}
The subscripts \(1,2\) refer to the two auxiliary spaces.

We stress that $R_{12}(u,v)$ does not depend on the inhomogeneities, but the Lax operators explicitly depend on
$\beta_j$ and $\alpha_j$. In the special case $\beta_j=0$ we obtain the RLL relation derived by Fendley in
\cite{ffd}. Thus, from the integrability point of view, the whole model family belongs to the same integrable
structure, and only the Lax operator changes as we switch on the perturbation. 

The verification of the RLL relation uses only the Pauli identities
\begin{equation}
X_jY_j=\ii Z_j,\qquad
Y_jX_j=-\ii Z_j,\qquad
Z_jX_j=\ii Y_j,\qquad
X_jZ_j=-\ii Y_j,
\label{eq:mpo-pauli-products}
\end{equation}
together with \(X_j^2=Y_j^2=Z_j^2=\I_j\).  After substituting
\eqref{eq:mpo-lax-matrix}, both sides of \eqref{eq:mpo-local-RTT} are
\(9\times9\) auxiliary matrices whose entries are linear combinations of
\(\I_j,X_j,Y_j,Z_j\); comparing these four coefficients entry by entry gives
\eqref{eq:mpo-local-RTT}.

Multiplying \eqref{eq:mpo-local-RTT} over \(j=1,\dots,L\) gives the global RTT
relation for
\[
\Omega(u)=\mathcal L_1(u)\mathcal L_2(u)\cdots\mathcal L_L(u).
\]
Since
\[
\mathcal R(u,v)|00\rangle=(u+v)|00\rangle,
\qquad
\langle00|\mathcal R(u,v)=(u+v)\langle00|,
\]
sandwiching the global RTT relation between \(\langle00|\) and \(|00\rangle\)
gives
\begin{equation}
[T_M(u),T_M(v)]=0.
\label{eq:mpo-transfer-commutativity}
\end{equation}
For \(u+v=0\) the same identity follows by polynomial continuation.

\subsection{The dressed edge operator}

The right boundary Pauli operator
\begin{equation}
\chi_R=Z_L
\label{eq:mpo-right-edge}
\end{equation}
anti-commutes precisely with \(B_M\) and \(A_{M-1}\), and commutes with all
other \(B_j,A_j\).  The dressed edge operator is
\begin{equation}
\Psi_L(u)=T_M(-u)\,Z_L\,T_M(u).
\label{eq:mpo-dressed-edge}
\end{equation}
At a zero \(u=u_k\) of the scalar polynomial associated with
\(T_M(-u)T_M(u)\), the operators \(\Psi_L(\pm u_k)\) give the corresponding
fermionic eigenmode operators after normalization.

For later use, set
\begin{equation}
x_j=u\beta_j,
\qquad
y_j=u\alpha_j,
\qquad
1\leq j\leq M,
\label{eq:mpo-x-y-def}
\end{equation}
with \(y_M=0\).  Define the scalar prefix polynomial \(p_m(u)\) by
\begin{equation}
p_m(u)=
\sum_{S\in\mathcal I_m}
(-u^2)^{|S|}
\prod_{g\in S}w(g)^2,
\qquad
w(B_j)=\beta_j,
\qquad
w(A_j)=\alpha_j.
\label{eq:mpo-scalar-polynomial-def}
\end{equation}
Equivalently,
\begin{equation}
p_{-1}(u)=p_0(u)=1,
\qquad
p_1(u)=1-x_1^2,
\label{eq:mpo-p-initial}
\end{equation}
and for \(m\geq2\),
\begin{equation}
p_m(u)=p_{m-1}(u)-x_m^2p_{m-2}(u)-y_{m-1}^2p_{m-3}(u).
\label{eq:mpo-p-recursion}
\end{equation}
Thus \(p_M(u)\) is the full scalar polynomial, written as a polynomial in
\(u^2\).

\subsection{Naive doubled MPO and the final-site boundary}

Starting from \eqref{eq:mpo-transfer-main}, the naive doubled MPO for
\(\Psi_L(u)\) has auxiliary dimension \(9\).  For \(1\leq j\leq M\), define
\begin{equation}
\mathcal K_j(u)=\mathcal L_j(-u)\otimes \mathcal L_j(u).
\label{eq:mpo-K-def}
\end{equation}
The tensor product is over the two auxiliary spaces, while the physical
entries are multiplied in the order in which they appear in
\(T_M(-u)Z_LT_M(u)\).

The last physical site is more naturally kept as a right boundary vector,
because \(Z_L\) sits between the two final Lax factors.  In the standard basis
\(|a\rangle\otimes |b\rangle\), \(0\leq a,b\leq2\), define
\begin{equation}
\bigl(|r_9(u)\rangle\bigr)_{ab}
=
\bigl(\mathcal L_L(-u)\bigr)_{a0}
\,Z_L\,
\bigl(\mathcal L_L(u)\bigr)_{b0}.
\label{eq:mpo-r9-def}
\end{equation}
Since \(\beta_L=\alpha_L=0\), the only nonzero entries are
\begin{equation}
(r_9)_{00}=Z_L,
\qquad
(r_9)_{11}=-Z_L,
\qquad
(r_9)_{01}=\ii Y_L,
\qquad
(r_9)_{10}=-\ii Y_L.
\label{eq:mpo-r9-entries}
\end{equation}
Then
\begin{equation}
\Psi_L(u)
=
\langle00|
\mathcal K_1(u)\mathcal K_2(u)\cdots\mathcal K_M(u)
|r_9(u)\rangle .
\label{eq:mpo-psi-naive}
\end{equation}

\subsection{Exact compression from bond dimension \(9\) to \(6\)}

Let \(\mathcal W_6\subset \mathbb C^3\otimes\mathbb C^3\) be spanned by
\begin{align}
s_1&=|00\rangle,
&
s_2&=|11\rangle,
&
s_3&=|22\rangle,
\nonumber\\
s_4&=\frac{|20\rangle-|02\rangle}{2},
&
s_5&=\frac{|10\rangle-|01\rangle}{2},
&
s_6&=\frac{|21\rangle-|12\rangle}{2}.
\label{eq:mpo-W6-basis}
\end{align}
The factors \(1/2\) are a basis convention chosen so that the reduced tensor
has no extra factors \(1/2\).  With row-vector multiplication, a direct local
calculation gives
\begin{align}
s_1\mathcal K_j
&=
s_1
-x_j^2s_2
-y_j^2s_3
+2y_jZ_js_4
+2x_jY_js_5
+2\ii x_jy_jX_js_6,
\nonumber\\
s_2\mathcal K_j
&=
s_1,
\nonumber\\
s_3\mathcal K_j
&=
s_2,
\nonumber\\
s_4\mathcal K_j
&=
X_js_5-\ii y_jY_js_6,
\nonumber\\
s_5\mathcal K_j
&=
-\ii y_jY_js_4+\ii x_jZ_js_5,
\nonumber\\
s_6\mathcal K_j
&=
s_5.
\label{eq:mpo-W6-action}
\end{align}
Hence \(\mathcal W_6\) is invariant under every \(\mathcal K_j(u)\).  Moreover,
\(|00\rangle=s_1\) and the boundary vector \(|r_9(u)\rangle\) belongs to
\(\mathcal W_6\).  In the basis \eqref{eq:mpo-W6-basis}, the reduced local
tensor is
\begin{equation}
\mathsf M_j^{(6)}(u)=
\begin{pmatrix}
\I_j & -x_j^2\I_j & -y_j^2\I_j
     & 2y_jZ_j & 2x_jY_j & 2\ii x_jy_jX_j\\
\I_j & 0 & 0 & 0 & 0 & 0\\
0 & \I_j & 0 & 0 & 0 & 0\\
0 & 0 & 0 & 0 & X_j & -\ii y_jY_j\\
0 & 0 & 0 & -\ii y_jY_j & \ii x_jZ_j & 0\\
0 & 0 & 0 & 0 & \I_j & 0
\end{pmatrix}.
\label{eq:mpo-M6}
\end{equation}
The reduced right boundary is
\begin{equation}
|r_6(u)\rangle=
\begin{pmatrix}
Z_L\\
-Z_L\\
0\\
0\\
-\ii Y_L\\
0
\end{pmatrix},
\qquad
\langle \ell_6|=(1,0,0,0,0,0).
\label{eq:mpo-r6}
\end{equation}
Therefore
\begin{equation}
\Psi_L(u)=
\langle \ell_6|
\mathsf M_1^{(6)}(u)\mathsf M_2^{(6)}(u)\cdots
\mathsf M_M^{(6)}(u)
|r_6(u)\rangle .
\label{eq:mpo-psi-M6}
\end{equation}

\subsection{Exact compression from bond dimension \(6\) to \(4\)}

The tensor \(\mathsf M_j^{(6)}(u)\) is block upper triangular.  Its first three
states form a scalar block,
\begin{equation}
S_j(u)=
\begin{pmatrix}
1 & -x_j^2 & -y_j^2\\
1 & 0      & 0\\
0 & 1      & 0
\end{pmatrix}.
\label{eq:mpo-scalar-block}
\end{equation}
Starting from the scalar row vector \((1,0,0)\), the first component after
\(r\) sites is \(p_{r-1}(u)\).  In particular, before site \(j\) the first
scalar component is \(p_{j-2}(u)\).  Since the coupling from the scalar block
to the non-scalar block occurs only from the first scalar state, we can
integrate out the scalar block exactly.

The resulting bond-dimension-\(4\) tensor acts on the basis consisting of one
integrated scalar state, followed by \(s_4,s_5,s_6\).  In the following display
\(p_{j-2}\) means \(p_{j-2}(u)\), with \(p_{-1}(u)=p_0(u)=1\):
\begin{equation}
\mathsf M_j^{(4)}(u)=
\begin{pmatrix}
\I_j
&
2y_jp_{j-2}Z_j
&
2x_jp_{j-2}Y_j
&
2\ii x_jy_jp_{j-2}X_j
\\
0 & 0 & X_j & -\ii y_jY_j\\
0 & -\ii y_jY_j & \ii x_jZ_j & 0\\
0 & 0 & \I_j & 0
\end{pmatrix}.
\label{eq:mpo-M4}
\end{equation}
It remains to compute the scalar part of the final boundary.  After \(M\)
sites, the scalar row has first component \(p_{M-1}(u)\).  If one appended the
last zero-weight site \(L=M+1\), the new first component would be \(p_M(u)\).
Thus the second scalar component before the last site is
\(p_M(u)-p_{M-1}(u)\).  Contracting with the scalar part
\((Z_L,-Z_L,0)^t\) of \eqref{eq:mpo-r6} gives
\[
p_{M-1}Z_L-\bigl(p_M-p_{M-1}\bigr)Z_L
=
\bigl(2p_{M-1}-p_M\bigr)Z_L.
\]
Therefore
\begin{equation}
|r_4(u)\rangle=
\begin{pmatrix}
\bigl(2p_{M-1}(u)-p_M(u)\bigr)Z_L\\
0\\
-\ii Y_L\\
0
\end{pmatrix},
\qquad
\langle \ell_4|=(1,0,0,0).
\label{eq:mpo-r4}
\end{equation}
The final compressed formula is
\begin{equation}
\Psi_L(u)=
\langle \ell_4|
\mathsf M_1^{(4)}(u)\mathsf M_2^{(4)}(u)\cdots
\mathsf M_M^{(4)}(u)
|r_4(u)\rangle .
\label{eq:mpo-psi-M4}
\end{equation}
The compressions
\[
9\longrightarrow 6\longrightarrow 4
\]
are exact for arbitrary inhomogeneous coefficients \(\alpha_j,\beta_j\).

To our knowledge, this MPO construction is new even for the original FFD model of Fendley.


\providecommand{\href}[2]{#2}\begingroup\raggedright\endgroup

\end{document}